\documentclass[12pt]{article}

\usepackage[letterpaper,margin=1.25in]{geometry}
\usepackage[T1]{fontenc}
\usepackage{mathpazo}
\usepackage{amsmath,amssymb,amsthm,mathtools}
\allowdisplaybreaks
\numberwithin{equation}{section}

\usepackage{enumitem}
\usepackage{xcolor}
\usepackage[colorlinks=true,
            linkcolor={black!65!blue},
            citecolor={black!55!green},
            urlcolor={black!45!blue},
            linktoc=all]{hyperref}

\usepackage[authoryear,round]{natbib}

\usepackage{titlesec}
\titleformat{\section}{\normalfont\large\bfseries}{\thesection}{0.6em}{}
\titleformat{\subsection}{\normalfont\normalsize\bfseries}{\thesubsection}{0.6em}{}
\titlespacing*{\section}{0pt}{1.5\baselineskip}{0.5\baselineskip}
\titlespacing*{\subsection}{0pt}{1.1\baselineskip}{0.3\baselineskip}

\theoremstyle{plain}
\newtheorem{theorem}{Theorem}[section]
\newtheorem{lemma}[theorem]{Lemma}
\newtheorem{proposition}[theorem]{Proposition}
\newtheorem{corollary}[theorem]{Corollary}
\theoremstyle{definition}
\newtheorem{definition}[theorem]{Definition}
\newtheorem{axiom}[theorem]{Axiom}
\newtheorem{remark}[theorem]{Remark}
\newtheorem{example}[theorem]{Example}

\title{Strategy-Proof Set-Valued Choice on Single-Dipped Domains}

\author{Arkarup Basu Mallik \footnote{Economics Research Unit, Indian Statistical Institute, Kolkata. Email: arkarupbasumallik97@gmail.com}\, Mihir Bhattacharya\footnote{Department of Economics, Ashoka University, Rajiv Gandhi Education City, Rai, Sonipat,  Haryana, 131029, India. Email: mihir.bhattacharya@ashoka.edu.in} \,
Ojasvi Khare\footnote{Shiv Nadar University, Department of Economics, Gautam Buddha Nagar, Uttar Pradesh, 201314. Email: ojasvi.khare@snu.edu.in} }

\date{}

\begin{document}

\maketitle           

\begin{abstract}
We study strategy-proof social choice where agents have single-dipped preferences and outcomes may be sets of alternatives. While point-valued strategy-proof rules select an endpoint, we ask which outcomes survive when sets are compared through responsive extensions. Under the standard responsive extension, preferences over fixed-length intervals remain single-dipped, so the endpoint restriction persists and anonymity yields a single-threshold rule. Under responsive dominance, the incomplete order shared by all responsive extensions, an interior interval can survive because the two extreme intervals need not be comparable. For arbitrary sets, requiring every chosen alternative to be most preferred by some agent yields  voting-by-committees rules, with one committee per endpoint; anonymous rules have a two-threshold quota form. The characterization holds on the entire endpoint-peaked domain, and every such rule is group-strategy-proof.

\end{abstract}

\section{Introduction}\label{sec:intro}
Consider the location of an undesirable public facility on a line---a landfill, a prison, or a transmitter---so that each agent prefers the facility to be as far as possible from a single least preferred point. Preferences of this kind are \emph{single-dipped}: each has a worst alternative, the dip, and preference increases with distance from it. Consequently, the most preferred alternative is always one of the two endpoints of the line. Single-dipped preferences are the reflection of the single-peaked preferences of \cite{black1948} and \cite{moulin1980}, but the two domains behave very differently under strategy-proofness. On single-peaked domains, strategy-proof rules are generalized medians and may select interior alternatives (\cite{moulin1980}). On the full single-dipped domain, the range of a unanimous strategy-proof rule is confined to the two endpoints. A further distinctive feature of the domain is that strategy-proofness and group strategy-proofness coincide under the relevant assumptions (\cite{manjunath2014} and \cite{barbera2012}). \cite{manjunath2014} describes the resulting rules through a collection of decisive coalitions that determines which endpoint is chosen. More generally, \cite{peremansstorcken1999} show that on any subdomain of the single-dipped domain, the range of a strategy-proof rule has cardinality at most \(2^n\); on the full domain this bound shrinks to two alternatives.

This paper asks how this endpoint result changes when the social choice rule may output a \emph{set} of alternatives rather than a single alternative. Set-valued outcomes arise naturally when a planner selects a region rather than a point, produces a shortlist of candidate locations, or deliberately leaves a final choice unresolved. Such outcomes require a preference over sets.

We use responsive extensions of \cite{gardenfors1976} and \cite{kelly1977} to compare sets. A responsive extension respects the underlying preference when equal-sized sets differ only by replacing one alternative with another, and satisfies G\"ardenfors monotonicity when an alternative is added to a set. A given preference over alternatives generally admits many such extensions. Their intersection, which we call \emph{responsive dominance}, is therefore only a partial order.

The contrast between a fixed responsive extension and responsive dominance runs through the paper. Under a suitable common responsive extension, preferences over fixed-length intervals inherit the single-dipped structure of the underlying preferences, and the classical endpoint restriction survives. Under responsive dominance, by contrast, the relevant comparisons may be incomplete, and an interior interval can survive because the two extreme intervals need not be comparable. Thus, what drives the difference is not merely that outcomes are set-valued, but how preferences over sets are extended.

We first develop the fixed-length interval result as a benchmark. When agents evaluate intervals through a common responsive extension satisfying suitable peak-consistency and richness properties, the induced preferences over the ordered intervals form the full single-dipped domain on that chain; the leximax extension provides a canonical example. Every unanimous and strategy-proof interval rule therefore has range contained in the two extreme intervals, and anonymity yields a single-threshold rule (Theorem~\ref{thm:interval}). By contrast, under responsive dominance an interior interval can be Pareto efficient and immune to unilateral manipulation (Example~\ref{ex:interior}), precisely because the two extreme intervals may be incomparable to an agent whose dip lies in the interior (Example~\ref{ex:leximaxmin}). The point-valued endpoint argument therefore does not automatically extend to set-valued outcomes under dominance.

Our main results turn to arbitrary nonempty subsets of the line. Sets are compared by responsive dominance and, in addition to anonymity and strategy-proofness, we impose one selection requirement: an alternative may be chosen only if it is the most preferred alternative of some agent. On the single-dipped domain, this requirement has a particularly sharp implication. Since every agent's most preferred alternative is an endpoint, every outcome can contain only one endpoint, the other endpoint, or both. Although responsive dominance is generally incomplete, it is complete over these three possible outcomes, and an agent's ranking among them depends only on which endpoint she prefers. The problem therefore reduces to social choice over a three-element chain with two possible preference types.

Our main characterization (Theorem~\ref{thm:committee}) shows that top-selection and strategy-proofness are equivalent to a \emph{voting-by-committees} representation in the sense of \cite{barbera1991}. Two monotone committees determine, respectively, whether $x_1$ and $x_m$ belong to the outcome, subject to a covering condition ensuring that the outcome is nonempty and boundary conditions ensuring top-selection. Thus the committee structure is not imposed as a voting primitive: it is derived from strategy-proofness and top-selection.

This two-committee structure is the set-valued counterpart of the single decisive family that governs point-valued choice. In the point-valued problem, selecting one endpoint is the same act as excluding the other. When the outcome may be a set, however, admitting $x_1$ and admitting $x_m$ become separate membership decisions governed by the committees $W_1$ and $W_m$. The covering condition links them only through non-emptiness, and the compromise outcome $\{x_1,x_m\}$ arises  when both committees are decisive. Thus the point-valued dichotomy becomes a trichotomy: the two endpoint alternatives are unchanged, but set-valuedness allows them to be combined into a third outcome with no point-valued counterpart. Under anonymity, the single threshold of \cite{manjunath2014} correspondingly becomes a pair $q_1\leq q_2$ (Corollary~\ref{cor:quota}), with $q_1=q_2$ recovering the single-valued rule as the case in which the two-element outcome is never selected.

The characterization also has two further implications. First, the rules are automatically group-strategy-proof (Proposition~\ref{prop:gsp}). This follows from a general monotonicity lemma: on a finite chain on which each agent has one of two opposite preferences, strategy-proofness is equivalent to monotonicity, and monotonicity implies group-strategy-proofness (Lemma~\ref{lem:chain}). The same observation drives both the interval benchmark and the committee characterization. Second, single-dippedness is stronger than necessary for the main result. The characterization continues to hold on the entire endpoint-peaked domain, which is strictly larger than the single-dipped domain (Proposition~\ref{prop:maximal}). What matters for the committee result is therefore not the full single-dipped structure, but the location of the peaks at the two endpoints.

Finally, top-selection cannot simply be replaced by Pareto efficiency. An interior alternative may be Pareto efficient even on the single-dipped domain (Remark~\ref{rem:eff}). Pareto efficiency therefore does not prevent the rule from selecting alternatives that no agent ranks first, whereas top-selection rules them out by construction. This distinction clarifies the role of the additional axiom: the three-outcome range, and hence the voting-by-committees structure, emerges because top-selection eliminates the interior alternatives.

The results thus separate three effects of allowing set-valued outcomes. Under a suitable complete responsive extension, the classical endpoint restriction survives for fixed-length intervals. Under responsive dominance, incompleteness permits interior intervals to survive. Once top-selection restricts arbitrary outcomes to the endpoint-generated three-element range, however, dominance becomes complete again and strategy-proofness recovers a clean voting structure. The paper therefore identifies both the source of the failure of the classical endpoint argument and the additional selection requirement under which a tractable characterization is restored.

\paragraph{Related literature.} The domain-restriction response to the impossibility of \cite{gibbard1973} and \cite{satterthwaite1975} begins on single-peaked domains with \cite{black1948} and \cite{moulin1980}, where strategy-proof rules are generalized medians. Single-dipped domains provide a strikingly different restricted-domain model. On arbitrary subdomains, strategy-proofness already imposes strong restrictions on the size of the range; on the full domain of single-dipped preferences, the range contains at most two alternatives (\cite{peremansstorcken1999}, \cite{manjunath2014}, \cite{barbera2012}). Thus the median rules of the single-peaked world are replaced by endpoint rules governed by binary decisiveness. We ask how this point-valued endpoint structure changes once outcomes may be sets, and how the single decisive family is transformed when admitting the two endpoints need no longer be mutually exclusive decisions.

Beyond the public-good location model, single-dipped preferences have also been studied in other environments, including the division of a private good, where strategy-proofness can force dictatorship (\cite{klaus1997}, \cite{klaus2001}), and object reallocation (\cite{tamura2023}). Mixed domains combining single-peaked and single-dipped preferences have likewise been characterized: \cite{alcaldeunzu2024} describe strategy-proof rules when the planner knows each agent's preference type. These models remain point-valued; our contribution is to move to set-valued outcomes on the single-dipped domain itself.

Closest to our benchmark is the literature on \emph{interval} social choice, in which the outcome is a contiguous block of alternatives---a segment of land, a stretch of a street---rather than a point. \cite{miyagawa2001} locates a facility as a segment on a street, and \cite{klausprotopapas2020} and \cite{bhattacharyakhare2024} study single-peaked preferences extended to fixed-length intervals by responsiveness, characterizing the strategy-proof rules as generalized-median (interval) rules, which select interior intervals. Our interval benchmark is the single-dipped counterpart of this single-peaked model: in the same responsiveness-based framework, strategy-proofness forces the outcome to the two \emph{extreme} intervals rather than to a median (Theorem~\ref{thm:interval})---the dual of the single-peaked result---and the interior-interval possibility we identify (Example~\ref{ex:interior}) is specific to the partiality of responsive dominance, which has no single-peaked analogue.

More broadly, the strategy-proofness of social choice \emph{correspondences}---rules returning sets of alternatives---has been studied since the foundational work of \cite{gardenfors1976} and \cite{kelly1977}, with subsequent characterizations under alternative extensions of preferences over sets. In much of that literature, a chosen set represents indeterminacy, a tie, or a collection of alternatives pending further resolution. Here, by contrast, the set is itself the object of choice, is evaluated directly through responsive dominance, and is tied to the agents' most preferred alternatives through top-selection.

Our rules are group-strategy-proof (Proposition~\ref{prop:gsp}), connecting the paper to the literature on domains where individual and group strategy-proofness coincide (\cite{peremansstorcken1999}, \cite{lebretonzaporozhets2009}, \cite{barberabergamoreno2010}). Once top-selection reduces the range to a short chain, the problem also connects to the study of strategy-proof choice over binary and small ranges (\cite{larssonsvensson2006}, \cite{manjunath2012}, \cite{barbera2012binary}). The difference is that our three outcomes arise endogenously from set-valued choice: the two endpoint alternatives may be selected separately or jointly.

Finally, the committee representation in our characterization is the voting-by-committees form of \citet{barbera1991}, with the two endpoints as the objects whose membership in the outcome is determined. Constrained and set-valued committee problems on single-peaked domains are studied by \citet{barbera2001}. Our contribution is to show that, in the single-dipped setting, responsive dominance and top-selection together reduce arbitrary set-valued choice to two linked endpoint-membership decisions, yielding a committee representation derived from strategy-proofness rather than imposed as a voting primitive.

The paper proceeds as follows. Section~\ref{sec:model} sets up the model and records the monotone-selection lemma. Section~\ref{sec:intervals} treats the interval benchmark, and Section~\ref{sec:faithful} the general set-valued problem. Section~\ref{sec:conclusion} concludes.

\section{Model}\label{sec:model}

Let $N=\{1,\ldots,n\}$ be the set of agents and $X=\{x_1,\ldots,x_m\}$ the set of alternatives, ordered $x_1<\cdots<x_m$, with $m\geq 3$.

\begin{definition}[Single-dipped preference]
A \textbf{strict preference} $P_i$ on $X$ is a strict linear order on $X$. It is \textbf{single-dipped} if there is $d(P_i)\in X$, called the \emph{dip}, such that for all $x,y\in X$,
\begin{align}
x<y\leq d(P_i) &\implies xP_i y, \label{eq:left}\\
d(P_i)\leq x<y &\implies yP_i x. \label{eq:right}
\end{align}
\end{definition}

Thus, along either side of the dip, preference improves as one moves away from it. Let $\tau(P_i)$ denote the unique most preferred alternative under $P_i$. The top alternative must be an endpoint. Indeed, if $d(P_i)=x_k$, then for every $1<j\leq k$, \eqref{eq:left} gives $x_1P_ix_j$, while for every $k\leq j<m$, \eqref{eq:right} gives $x_mP_ix_j$. Hence every interior alternative is beaten by an endpoint, and $\tau(P_i)\in\{x_1,x_m\}$.

Let $\mathcal D$ be the set of single-dipped preferences and let $P\in\mathcal D^n$ be a profile. For an agent $i$ and report $P_i'$, let $(P_i',P_{-i})$ denote the profile obtained from $P$ by replacing $P_i$ with $P_i'$, where $P_{-i}=(P_j)_{j\neq i}$. Similarly, for a coalition $S\subseteq N$ and reports $P_S'=(P_i')_{i\in S}$, let $(P_S',P_{-S})$ denote the profile obtained by replacing $P_i$ with $P_i'$ for each $i\in S$.

A \textbf{social choice function} is a map
\[
f:\mathcal D^n\to\mathcal A,
\]
where $\mathcal A\subseteq 2^X\setminus\{\emptyset\}$ is a family of feasible nonempty outcomes. Below, $\mathcal A$ will either be a family of intervals or the full family of nonempty subsets of $X$. To compare outcomes, we extend preferences over alternatives to preferences over sets.

\begin{definition}[Responsive extension]\label{def:respext}
Fix $P_i\in\mathcal D$. A \textbf{responsive extension} of $P_i$ is a strict linear order $\succ_i$ on the family of feasible nonempty outcomes $\mathcal A$ satisfying the following conditions.

\smallskip

\noindent\textbf{Responsiveness:} For $A,B\in\mathcal A$ of equal cardinality such that $A\setminus B=\{x\}$ and $B\setminus A=\{y\}$,

$$
A\succ_i B\quad\Longleftrightarrow\quad xP_i y.
$$

\noindent\textbf{G\"ardenfors monotonicity:} For $A\in\mathcal A$ and $z\notin A$, whenever $A\cup{z}\in\mathcal A$,

$$
zP_i y\ \text{for all }y\in A
\quad\Longrightarrow\quad
A\cup\{z\}\succ_i A,
$$

while

$$
yP_i z\ \text{for all }y\in A
\quad\Longrightarrow\quad
A\succ_i A\cup\{z\}.
$$

Write $A\succ_i^R B$ (\textbf{strict responsive dominance}) if

$$
A\succ_i B
$$

under every responsive extension of $P_i$, and write $A\succeq_i^R B$ if either $A=B$ or $A\succ_i^R B$. The existence of responsive extensions is standard (\cite{gardenfors1976}, \cite{kelly1977}). Responsive dominance is a partial order, obtained by retaining only those comparisons on which all responsive extensions agree.
\end{definition}

Two responsive extensions will recur below and are worth naming. Both compare sets of equal cardinality by ordering their elements according to $P_i$, differing only in whether the comparison begins with the best or the worst element.

\begin{definition}[Leximax and leximin]\label{def:leximaxmin}
Fix $P_i$ and let $A,B\in\mathcal A$ have equal cardinality $k$. Write their elements in decreasing order under $P_i$ as

$$
a_1P_i\cdots P_i a_k
\qquad\text{and}\qquad
b_1P_i\cdots P_i b_k.
$$

The \textbf{leximax extension} $\succ_i^{\mathrm{lmax}}$ compares the two lists from the top:

$$
A\succ_i^{\mathrm{lmax}}B
$$

if $a_tP_i b_t$ at the smallest index $t$ for which $a_t\neq b_t$.

The \textbf{leximin extension} $\succ_i^{\mathrm{lmin}}$ compares them from the bottom:

$$
A\succ_i^{\mathrm{lmin}}B
$$

if $a_tP_i b_t$ at the largest index $t$ for which $a_t\neq b_t$.
\end{definition}

For sets of equal cardinality, both rules satisfy responsiveness: if $A$ and $B$ differ in  one element, their ordered lists differ in  one position, and both rules compare the two differing elements according to $P_i$. They can nonetheless disagree, and whenever they do, responsive dominance is silent.

\begin{example}\label{ex:leximaxmin}
Let $m=4$ and suppose $P_i:\quad x_1\,P_i\,x_2\,P_i\,x_3\,P_i\,x_4.$ Compare $A=\{x_1,x_4\}
\qquad\text{and}\qquad
B=\{x_2,x_3\},$ whose $P_i$-ordered lists are $(x_1,x_4)$ and $(x_2,x_3)$, respectively. Leximax first compares the best elements, $x_1$ and $x_2$. Since $x_1P_ix_2$, it ranks $A\succ_i^{\mathrm{lmax}}B.$ Leximin first compares the worst elements, $x_4$ and $x_3$. Since $x_3P_ix_4$, it ranks $B\succ_i^{\mathrm{lmin}}A.$

Thus the two responsive extensions disagree. Consequently, neither $A\succ_i^R B$ nor $B\succ_i^R A$ holds: under responsive dominance, $A$ and $B$ are \emph{incomparable}. This incompleteness of responsive dominance allows an interior interval to survive in Example~\ref{ex:interior}.
\end{example}

We use throughout the standard requirements of Pareto efficiency, strategy-proofness, and anonymity and, in Proposition~\ref{prop:gsp}, the stronger requirement of group-strategy-proofness. Let $\succ_i^{*}$ denote agent $i$'s \textbf{operative preference relation} over feasible outcomes, with weak counterpart $\succeq_i^{*}$. The operative relation is specified separately in each environment. When outcomes are compared by responsive dominance, $\succ_i^{*}=\succ_i^R$; when a particular responsive extension is fixed, $\succ_i^{*}=\succ_i$. Thus, $\succ_i^{*}=\succ_i^R$ in the responsive-dominance part of Section~\ref{sec:intervals} and throughout Section~\ref{sec:faithful}, whereas $\succ_i^{*}$ is a fixed responsive extension in the relevant parts of Section~\ref{sec:intervals}.

\begin{axiom}[Pareto efficiency]\label{ax:pe}
A social choice function $f$ is \textbf{Pareto efficient} if, for every profile $P$, there is no feasible outcome $b$ such that

$$
b\succ_i^{*}f(P)
\qquad\text{for every }i\in N.
$$

\end{axiom}

\begin{axiom}[Strategy-proofness]\label{ax:sp}
A social choice function $f$ is \textbf{strategy-proof} if, for every profile $P$, every agent $i\in N$, and every report $P_i'\in\mathcal D$,

$$
f(P_i',P_{-i})\not\succ_i^{*}f(P).
$$

\end{axiom}

\begin{axiom}[Group-strategy-proofness]\label{ax:gsp}
A social choice function $f$ is \textbf{group-strategy-proof} if there does not exist a profile $P$, a coalition $S\subseteq N$, and reports $P_S'$ such that

$$
f(P_S',P_{-S})\succeq_i^{*}f(P)
\quad\text{for every }i\in S,
$$

and

$$
f(P_S',P_{-S})\succ_i^{*}f(P)
\quad\text{for some }i\in S.
$$

\end{axiom}

\begin{axiom}[Anonymity]\label{ax:anon}
A social choice function $f$ is \textbf{anonymous} if, for every profile $P=(P_1,\ldots,P_n)$ and every permutation $\pi$ of $N$,

$$
f(P_{\pi(1)},\ldots,P_{\pi(n)})
=
f(P_1,\ldots,P_n).
$$

\end{axiom}

\paragraph{Notation.}
We collect here notation used repeatedly below. For a strict preference $P_i$ on $X$, $\tau(P_i)$ denotes its top alternative and $d(P_i)$ its dip. The single-dipped domain is denoted by $\mathcal D$, and

$$
\mathcal E=\{P_i:\tau(P_i)\in\{x_1,x_m\}\}\supseteq\mathcal D
$$

denotes the \emph{endpoint-peaked} domain: it requires only that the peak be an endpoint and imposes nothing on the ranking below it, whereas single-dippedness further fixes that ranking through the dip. The inclusion is strict; for example, with $m=4$ the preference $x_1\,x_3\,x_2\,x_4$ is endpoint-peaked but not single-dipped, since single-dippedness with peak $x_1$ would require $x_1\,x_2\,x_3\,x_4$. Preferences over outcomes are written $\succ_i$, with weak counterpart $\succeq_i$.

A \emph{chain order} on a family of outcomes is a fixed strict linear order, denoted by $<_L$, with reflexive closure $\leq_L$. The chain order is exogenous and should not be confused with an agent's preference. In Section~\ref{sec:intervals},

$$
I_1<_L\cdots<_L I_M
$$

is the chain of intervals of length $\ell$, and $\rho(P_i)\in\{I_1,I_M\}$ denotes the induced top interval. The relations $\succ_i^R$ and $\succeq_i^R$ denote strict and weak responsive dominance, respectively. In Section~\ref{sec:faithful},
\[
N_1(P)=\{i:\tau(P_i)=x_1\},\qquad N_m(P)=N\setminus N_1(P),
\]
are the two endpoint-peaked coalitions, and $n_m(P)=|N_m(P)|$.

Finally, $\mathbf 1[\cdot]$ denotes the indicator function, equal to $1$ when its argument holds and $0$ otherwise.

We next relate strategy-proofness to an appropriate monotonicity property of social choice rules on the ordered family of intervals. Informally, monotonicity requires that when an agent changes her report in the direction of higher intervals, the selected outcome cannot move in the opposite direction, and symmetrically for a change in the direction of lower intervals. In the ordered environment considered below, this condition is equivalent to strategy-proofness. We now make this equivalence precise.

The same argument rules out coalitional manipulation. If a coalition moves the outcome upward in the chain order, any member whose operative preference ranks lower outcomes more highly is strictly worse off. Thus, a profitable deviation can involve only agents who prefer higher outcomes. But for such agents, the deviation changes their reports in the direction opposite to the resulting movement of the outcome, which monotonicity rules out. The case of a downward movement is symmetric.

The following lemma formalizes this argument.

\begin{lemma}[Monotone selection on a chain]\label{lem:chain}
Let $(\mathcal I,<_L)$ be the chain
\[
I_1<_L I_2<_L\cdots<_L I_M,
\]
with reflexive closure $\leq_L$. Each agent has one of two types, $\downarrow$ and $\uparrow$, determined by the direction in which her operative preference $\succ_i^{*}$ runs along the chain: type $\downarrow$ prefers outcomes against the order $<_L$, while type $\uparrow$ prefers them along it. Thus, for $A,B\in\mathcal I$,
\[
A\succ_i^{*}B\iff A<_L B\quad(\text{type }\downarrow),
\qquad
A\succ_i^{*}B\iff B<_L A\quad(\text{type }\uparrow),
\]
with weak counterpart $\succeq_i^{*}$ in each case. For a rule $g:\{\downarrow,\uparrow\}^n\to\mathcal I$, the following are equivalent:
\begin{enumerate}[label=(\roman*)]
\item $g$ is strategy-proof: for every type profile $t$, every agent $i$, and every report $t_i'$,
\[
g(t_i',t_{-i})\not\succ_i^{*}g(t);
\]
\item $g$ is monotone: for every $i$ and every $t_{-i}$,
\[
g(\downarrow,t_{-i})\leq_L g(\uparrow,t_{-i}).
\]
\end{enumerate}
Moreover, every monotone rule is group-strategy-proof.
\end{lemma}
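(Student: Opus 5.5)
We prove the two implications separately and then handle the group statement, working throughout with the fact that on the chain the operative preference of each type is a complete strict order. In particular, for type $\downarrow$ we have $A\succeq_i^{*}B\iff A\leq_L B$, and for type $\uparrow$ we have $A\succeq_i^{*}B\iff B\leq_L A$.

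\emph{(i)$\Rightarrow$(ii).} Fix $i$ and $t_{-i}$, and suppose for contradiction that $g(\uparrow,t_{-i})<_L g(\downarrow,t_{-i})$. Take the profile in which $i$'s true type is $\downarrow$. By reporting $\uparrow$, agent $i$ obtains $g(\uparrow,t_{-i})$, which lies strictly lower in the chain. A type-$\downarrow$ agent strictly prefers lower outcomes, so $g(\uparrow,t_{-i})\succ_i^{*}g(\downarrow,t_{-i})$, contradicting strategy-proofness. (The symmetric deviation by a true $\uparrow$ type would work equally well.)

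\emph{(ii)$\Rightarrow$(i).} Fix a profile $t$, an agent $i$ and a report $t_i'$. If $t_i'=t_i$ there is nothing to prove. Otherwise, suppose $t_i=\downarrow$ and $t_i'=\uparrow$. Monotonicity gives $g(t)\leq_L g(t_i',t_{-i})$, so a type-$\downarrow$ agent weakly prefers $g(t)$, and hence $g(t_i',t_{-i})\not\succ_i^{*}g(t)$. The case $t_i=\uparrow$, $t_i'=\downarrow$ is symmetric.

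\emph{Monotone implies group-strategy-proof.} This is the main step. First we extend monotonicity to coalitions. If $t'$ is obtained from $t$ by switching a set of agents from $\downarrow$ to $\uparrow$, then $g(t)\leq_L g(t')$; this follows by switching the agents one at a time and using transitivity of $\leq_L$. Call this \emph{coalitional monotonicity}.

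Now suppose coalition $S$ deviates from $t$ to $t'_S$, and write $B=g(t'_S,t_{-S})$ and $A=g(t)$. For the deviation to be profitable we need $B\neq A$, since some member must strictly gain. Suppose $A<_L B$; the other case is symmetric. Every $\downarrow$-member of $S$ is then strictly worse off, so no such member can be in $S$ if all members must weakly gain. Hence every member of $S$ has true type $\uparrow$.

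The members of $S$ who actually change their report therefore move from $\uparrow$ to $\downarrow$; members reporting $\uparrow$ again change nothing. Write $t'=(t'_S,t_{-S})$. Then $t$ is obtained from $t'$ by switching those changed agents from $\downarrow$ to $\uparrow$. Coalitional monotonicity gives $B=g(t')\leq_L g(t)=A$, which contradicts $A<_L B$. The symmetric case $B<_L A$ forces all members of $S$ to be of type $\downarrow$ and yields $A\leq_L B$, again a contradiction.

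The only delicate point is this step: reducing the coalition to same-type agents and keeping track of the direction of their misreports. Once that is done, chaining single-agent monotonicity finishes the argument.
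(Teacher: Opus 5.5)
Your proof is correct and follows the paper's argument essentially step for step: the same single-agent deviation argument for (i)$\Leftrightarrow$(ii), and the same reduction of a profitable coalition to members of a single type, followed by monotonicity. Your explicit ``coalitional monotonicity'' step, which switches agents one at a time and uses transitivity of $\leq_L$, spells out what the paper compresses into ``each such switch weakly lowers the outcome.''
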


\begin{proof}
(i)$\iff$(ii). Suppose first that $g$ is monotone, and fix $i$ and $t_{-i}$. If $i$ has type $\downarrow$, then
\[
g(\downarrow,t_{-i})\leq_L g(\uparrow,t_{-i}),
\]
so $g(\downarrow,t_{-i})\succeq_i^{*}g(\uparrow,t_{-i})$; hence reporting $\uparrow$ cannot benefit a type-$\downarrow$ agent. The same inequality gives $g(\uparrow,t_{-i})\succeq_i^{*}g(\downarrow,t_{-i})$ for a type-$\uparrow$ agent, so reporting $\downarrow$ cannot benefit her either. Thus $g$ is strategy-proof. Conversely, if $g$ is not monotone, then for some $i$ and $t_{-i}$,
\[
g(\uparrow,t_{-i})<_L g(\downarrow,t_{-i}),
\]
so $g(\uparrow,t_{-i})\succ_i^{*}g(\downarrow,t_{-i})$ for a type-$\downarrow$ agent, who can therefore profitably report $\uparrow$. Hence $g$ is not strategy-proof.

Group-strategy-proofness. Suppose a coalition $S$ can profitably deviate from a truthful type profile $t$ to $t'$, with $t_j'=t_j$ for every $j\notin S$. The two outcomes must differ. Suppose first that
\[
g(t)<_L g(t').
\]
Then $g(t)\succ_i^{*}g(t')$ for every type-$\downarrow$ agent, so no type-$\downarrow$ member of $S$ can be weakly better off; hence every member of $S$ has type $\uparrow$ at $t$. The deviation therefore only switches members from $\uparrow$ to $\downarrow$, and by monotonicity each such switch weakly lowers the outcome, giving $g(t')\leq_L g(t)$, a contradiction. The case $g(t')<_L g(t)$ is symmetric, with the roles of $\downarrow$ and $\uparrow$ interchanged. Thus no profitable coalitional deviation exists, and $g$ is group-strategy-proof.
\end{proof}

\section{Fixed-length intervals}\label{sec:intervals}

We begin by restricting attention to intervals of a common fixed length, so that the relevant outcomes admit a natural chain order by location.

Fix a length $\ell\in\{2,\ldots,m-1\}$. For $1\leq j\leq M:=m-\ell+1$, let
\[
I_j=\{x_j,\ldots,x_{j+\ell-1}\},
\]
and let
\[
\mathcal I=\{I_1,\ldots,I_M\},
\qquad
I_1<_L\cdots<_L I_M,
\]
where $<_L$ orders the intervals by their left endpoints. The intervals $I_1$ and $I_M$ are \emph{extreme}, while $I_j$ for $1<j<M$ is \emph{interior}. An interval rule is a map
\[
f:\mathcal D^n\to\mathcal I.
\]
Two length-$\ell$ intervals differ in a single element if and only if they are adjacent in the chain. Thus, responsiveness implies
\begin{equation}\label{eq:consec}
I_j\succ_i^R I_{j+1}
\quad\Longleftrightarrow\quad
x_jP_ix_{j+\ell}.
\end{equation}

In particular, the comparison of adjacent intervals is independent of the particular responsive extension used.

Since all intervals have the same cardinality $\ell$, the leximax extension of Definition~\ref{def:leximaxmin} is well defined on $\mathcal I$: it compares intervals according to their $P_i$-best elements, then their second-best elements, and so on. In the second subsection, we use this same leximax construction for every agent, yielding the fixed responsive extension specified there.

\begin{axiom}[Unanimity]\label{ax:unan}
An interval rule $f$ is \textbf{unanimous} if
\[
\tau(P_i)=x_1\quad\text{for all }i\in N
\quad\Longrightarrow\quad
f(P)=I_1,
\]
and
\[
\tau(P_i)=x_m\quad\text{for all }i\in N
\quad\Longrightarrow\quad
f(P)=I_M.
\]
\end{axiom}

\subsection{Responsive dominance}

In this subsection, the operative preference is responsive dominance over $\mathcal I$: that is, $\succ_i^{*}=\succ_i^R$, with weak counterpart $\succeq_i^R$.

\begin{lemma}\label{lem:interior_not_dominant}
Let $P_i\in\mathcal D$ and let $I_j$ be an interior interval. If $\tau(P_i)=x_1$, then
\[
I_j\not\succeq_i^R I_1,
\]
and if $\tau(P_i)=x_m$, then
\[
I_j\not\succeq_i^R I_M.
\]
Hence, no interior interval is an agent's most preferred outcome under every responsive extension.
\end{lemma}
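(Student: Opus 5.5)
To prove $I_j\not\succeq_i^R I_1$ when $\tau(P_i)=x_1$, it suffices to exhibit a single responsive extension of $P_i$ under which $I_1\succ_i I_j$. Since $I_j\neq I_1$, this rules out both $I_j=I_1$ and $I_j\succ_i^R I_1$. The natural candidate is the leximax extension of Definition~\ref{def:leximaxmin}, which is a responsive extension on $\mathcal I$ because all intervals there have the common cardinality $\ell$, as noted after that definition.

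The key step is to compare the $P_i$-best elements of $I_1$ and $I_j$. Since $\tau(P_i)=x_1$ and $x_1\in I_1$, the best element of $I_1$ is $x_1$. For $j>1$ the interval $I_j=\{x_j,\ldots,x_{j+\ell-1}\}$ does not contain $x_1$, so its best element $b_1$ satisfies $x_1P_ib_1$. The first position at which the ordered lists differ is therefore $t=1$, and leximax gives $I_1\succ_i^{\mathrm{lmax}}I_j$. Hence $I_j\succ_i^R I_1$ fails, and so $I_j\not\succeq_i^R I_1$.

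The case $\tau(P_i)=x_m$ is symmetric. Here $x_m\in I_M$ and $x_m\notin I_j$ for $j<M$, so leximax yields $I_M\succ_i^{\mathrm{lmax}}I_j$. Single-dippedness enters only through $\tau(P_i)\in\{x_1,x_m\}$, so the top alternative lies in exactly one extreme interval and in no interior one. The final sentence of the lemma then follows, because every agent's top is $x_1$ or $x_m$: an interior interval cannot weakly dominate $I_1$ (respectively $I_M$), so it is not the agent's most preferred outcome under every responsive extension.

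The only point needing care is confirming that leximax qualifies as a responsive extension on $\mathcal I$. Gärdenfors monotonicity is vacuous there, since no two intervals of equal length differ by the addition of a single element, and responsiveness was verified after Definition~\ref{def:leximaxmin}. I expect no other obstacle.
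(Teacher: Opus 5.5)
Your proposal is correct and follows the paper's proof essentially verbatim. Both take the leximax extension, observe that the $P_i$-best element $x_1$ of $I_1$ is absent from $I_j$ so that $I_1\succ_i^{\mathrm{lmax}}I_j$, conclude $I_j\not\succeq_i^R I_1$ since $I_j\neq I_1$, and handle $I_M$ symmetrically. Your added check that G\"ardenfors monotonicity is vacuous on $\mathcal I$, so that leximax is a responsive extension there, is a detail the paper leaves implicit.
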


\begin{proof}
Suppose $\tau(P_i)=x_1$, and consider the leximax extension $\succ_i^{\mathrm{lmax}}$ of $P_i$. Since $x_1\in I_1$ is the $P_i$-best element of $I_1$, while $x_1\notin I_j$, the $P_i$-best element of $I_j$ is strictly worse than $x_1$. Hence
\[
I_1\succ_i^{\mathrm{lmax}} I_j.
\]
Because $\succ_i^{\mathrm{lmax}}$ is a responsive extension under which $I_1$ is preferred to $I_j$, it follows that
\[
I_j\not\succ_i^R I_1.
\]
Since $I_j\neq I_1$, this implies $I_j\not\succeq_i^R I_1$. The case $\tau(P_i)=x_m$ is symmetric, using the leximax extension under which $I_M$, which contains $x_m$, is ranked above every interior interval.
\end{proof}

Lemma~\ref{lem:interior_not_dominant} shows that an interior interval can never be an agent's top outcome under every responsive extension. In the point-valued model, this observation can be combined with strategy-proofness to restrict the range to $\{I_1,I_M\}$: a strategy-proof rule selects only outcomes that can be top-ranked, while no interior interval is a top alternative. Under set-valued choice with responsive dominance, this argument no longer applies. Because responsive dominance is generally incomplete, an outcome can be immune to profitable deviations without being the top outcome under every responsive extension. The following example illustrates this possibility.

\begin{example}\label{ex:interior}
Let $m=4$, $\ell=2$, and $n=2$, so that
\[
\mathcal I=\{A,B,C\},
\qquad
A=\{x_1,x_2\},\quad
B=\{x_2,x_3\},\quad
C=\{x_3,x_4\},
\]
with
\[
A<_L B<_L C.
\]
Thus, $B$ is the unique interior interval. Consider the preferences
\[
P_1:x_1P_1x_2P_1x_4P_1x_3
\qquad\text{and}\qquad
P_2:x_4P_2x_3P_2x_1P_2x_2.
\]
Responsiveness implies
\[
A\succ_1^R B\succ_1^R C
\qquad\text{and}\qquad
C\succ_2^R B\succ_2^R A.
\]
Thus, no interval is strictly preferred to $B$ by both agents, and $B$ is Pareto efficient at this profile. Moreover, suppose that a unilateral deviation by agent~1 can change the outcome only between $B$ and $C$, while a unilateral deviation by agent~2 can change it only between $A$ and $B$. Since
\[
B\succ_1^R C
\qquad\text{and}\qquad
B\succ_2^R A,
\]
neither agent can profitably deviate from $B$. Hence $B$ can be selected without violating Pareto efficiency or strategy-proofness at this profile.

The point-valued argument excluding interior outcomes therefore fails for set-valued choice: an interior outcome need not be a top outcome for any agent in order to be Pareto efficient and immune to unilateral manipulation. Here, the relevant induced preferences happen to be complete; the distinction from a fixed responsive extension arises precisely because responsive dominance need not be complete, as illustrated by Example~\ref{ex:leximaxmin}.
\end{example}

Under responsive dominance, the induced relation on $\mathcal I$ is generally only a \emph{partial order}. For adjacent intervals, responsiveness directly determines the comparison:
\[
I_j\succ_i^R I_{j+1}
\quad\Longleftrightarrow\quad
x_jP_ix_{j+\ell},
\]
as in \eqref{eq:consec}. Comparisons between nonadjacent intervals may then follow by transitivity whenever the corresponding sequence of adjacent comparisons points in the same direction. However, when an agent's dip lies at an interior interval $I_k$, the direction of these adjacent comparisons reverses at $I_k$ (Lemma~\ref{lem:induced}). In particular, both extremes can dominate $I_k$,
\[
I_1\succ_i^R I_k
\qquad\text{and}\qquad
I_M\succ_i^R I_k,
\]
while there need be no responsive-dominance comparison between $I_1$ and $I_M$. Indeed, different responsive extensions can rank the two extremes differently, as in Example~\ref{ex:leximaxmin}. Thus, responsive dominance need not provide a complete preference over the chain of intervals.

The range restriction for single-dipped preferences therefore cannot be applied directly under responsive dominance, since it relies on a complete preference over the ordered outcomes. However, when a fixed responsive extension is selected, the resulting preference is complete and, for the leximax extension considered below, depends only on $P_i$. The induced preference over the chain
\[
I_1<_L\cdots<_L I_M
\]
can then be treated as a single-dipped preference over the ordered set of intervals, so that the corresponding range restriction applies and Lemma~\ref{lem:chain} can be invoked below.

\subsection{A fixed extension}

We now fix the common leximax extension defined above. The same extension is applied to every agent, so each agent's induced preference over $\mathcal I$ is determined solely by her reported preference $P_i$. Strategy-proofness is therefore understood with respect to this induced preference. In particular, agents with identical reports are assigned identical induced preferences, so anonymity is well defined.

\begin{lemma}\label{lem:three_point}
Let $P_i\in\mathcal D$ and let $u<v<w$. Then $v$ is not $P_i$-preferred to both $u$ and $w$.
\end{lemma}

\begin{proof}
If $v\leq d(P_i)$, then $u<v\leq d(P_i)$ and \eqref{eq:left} gives $uP_iv$. If $v>d(P_i)$, then $d(P_i)\leq v<w$ and \eqref{eq:right} gives $wP_iv$.
\end{proof}

\begin{lemma}[Induced single-dippedness]\label{lem:induced}
For every $\ell$ and every responsive extension of $P_i\in\mathcal D$, the induced preference over $(I_1,\ldots,I_M)$ is single-dipped with respect to the chain order $<_L$. Moreover, its worst interval contains $d(P_i)$, and its most preferred interval is either $I_1$ or $I_M$.
\end{lemma}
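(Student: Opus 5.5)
The plan is to reduce everything to the comparisons of adjacent intervals, which by \eqref{eq:consec} are the same under every responsive extension. Then I use transitivity of the fixed extension $\succ_i$, which is a strict linear order, to link them. Write $d(P_i)=x_k$. For $1\le j\le M-1$, call the step $j$ \emph{leftward} if $x_jP_ix_{j+\ell}$, which means $I_j\succ_i I_{j+1}$. Call it \emph{rightward} otherwise, which means $x_{j+\ell}P_ix_j$ and $I_{j+1}\succ_i I_j$.

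\emph{Step 1 (the sign pattern is monotone).} I show that a rightward step $j$ forces step $j+1$ to be rightward as well; this is the main point of the proof.
\begin{itemize}
\item If step $j$ is rightward, then $j+\ell>k$. Otherwise $x_j<x_{j+\ell}\le d(P_i)$, and \eqref{eq:left} would give $x_jP_ix_{j+\ell}$.
\item Hence $k\le j+\ell<j+\ell+1$, and \eqref{eq:right} gives $x_{j+\ell+1}P_ix_{j+\ell}$.
\item If $j+1\ge k$, then \eqref{eq:right} directly gives $x_{j+\ell+1}P_ix_{j+1}$.
\item If $j+1<k$, then \eqref{eq:left} gives $x_jP_ix_{j+1}$. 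The chain $x_{j+\ell+1}P_ix_{j+\ell}P_ix_jP_ix_{j+1}$ and transitivity of $P_i$ again give $x_{j+\ell+1}P_ix_{j+1}$.
\end{itemize}
In both cases step $j+1$ is rightward.

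Let $j^*$ be the smallest rightward step, or $j^*=M$ if there is none. Then steps $1,\dots,j^*-1$ are leftward and steps $j^*,\dots,M-1$ are rightward. By transitivity of $\succ_i$ this gives
\[
I_1\succ_i\cdots\succ_i I_{j^*}
\qquad\text{and}\qquad
I_{j^*}\prec_i I_{j^*+1}\prec_i\cdots\prec_i I_M .
\]
This is exactly single-dippedness with respect to $<_L$, with dip $I_{j^*}$, and $I_{j^*}$ is the worst interval.

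\emph{Step 2 (the dip interval contains $d(P_i)$).} I need $j^*\le k\le j^*+\ell-1$.
\begin{itemize}
\item Every step $j$ with $j+\ell\le k$ is leftward by \eqref{eq:left}. Hence $j^*\ge k-\ell+1$, and this also holds when $j^*=M$.
\item Every step $j\ge k$ with $j\le M-1$ is rightward by \eqref{eq:right}. So if $k\le M-1$, then $j^*\le k$.
\item If $k\ge M$, then $j^*\le M\le k$ holds trivially.
\end{itemize}
Together, $x_k\in I_{j^*}$.

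\emph{Step 3 (the best interval is extreme).} The argument in the paper showing that single-dipped tops are endpoints transfers to the chain. Every $I_j$ with $1<j\le j^*$ is beaten by $I_1$, and every $I_j$ with $j^*\le j<M$ is beaten by $I_M$. So the top interval is $I_1$ or $I_M$.

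The only delicate step is the monotonicity in Step 1, specifically the case $j+1<k$, which needs the detour through $x_j$. The rest is bookkeeping with \eqref{eq:left}, \eqref{eq:right} and transitivity.
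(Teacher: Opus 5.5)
Your proof is correct and follows the same plan as the paper: reduce everything to adjacent comparisons via \eqref{eq:consec}, show that all down-steps precede all up-steps, place the worst interval around the dip using \eqref{eq:left} and \eqref{eq:right}, and conclude that the top interval is extreme. The only difference is in the key step: the paper rules out an interior local maximum by applying Lemma~\ref{lem:three_point} twice to obtain a $P_i$-cycle, whereas you propagate a rightward step directly by a case split on where $d(P_i)$ lies relative to $x_{j+1}$, and both arguments are valid.
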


\begin{proof}
First, no interior interval can be a local maximum in the induced preference. Suppose, to the contrary, that for some $1<s<M$,

$$
I_s\succ_i I_{s-1}
\qquad\text{and}\qquad
I_s\succ_i I_{s+1}.
$$

By \eqref{eq:consec},

$$
x_{s+\ell-1}P_ix_{s-1}
\qquad\text{and}\qquad
x_sP_ix_{s+\ell}.
$$

Set

$$
a=x_{s-1},\qquad
b=x_s,\qquad
c=x_{s+\ell-1},\qquad
e=x_{s+\ell}.
$$

Then $a<b<c<e$. By Lemma~\ref{lem:three_point}, applied to $a<c<e$, $eP_ic,$ while, applied to $a<b<e$, $aP_ib.$

Together with the two comparisons above, this gives $aP_ibP_ieP_icP_ia,$ a contradiction. Now consider the consecutive comparisons along the chain. Call $I_t\succ_i I_{t+1} $ a down-step and $I_{t+1}\succ_i I_t$ an up-step. An up-step followed by a down-step would make the intervening interval a local maximum, which we have just ruled out. Hence all down-steps precede all up-steps. The induced preference therefore decreases along the chain up to a single worst interval and increases thereafter; that is, it is single-dipped with respect to $<_L$.

It remains to locate the worst interval. If $I_t$ lies entirely to the left of the dip, so that $t+\ell-1<d(P_i),$ then $x_t<x_{t+\ell}\leq d(P_i),$ and \eqref{eq:left} gives $x_tP_ix_{t+\ell}.$

By \eqref{eq:consec}, $I_t\succ_i I_{t+1}.$ Thus, intervals lying entirely to the left of the dip become worse as their location moves toward the dip. Symmetrically, if $I_t$ lies entirely to the right of the dip, then \eqref{eq:right} implies that moving toward the dip also makes the interval worse. Consequently, the worst interval cannot lie entirely on either side of $d(P_i)$ and must contain the dip.

Finally, since the induced preference is single-dipped on the chain, its most preferred interval is an extreme interval, $I_1$ or $I_M$.
\end{proof}

Write

$$
\rho(P_i)\in\{I_1,I_M\}
$$

for the most preferred interval under the induced leximax preference. Under leximax, $\rho(P_i)$ is determined by the peak of $P_i$. If $\tau(P_i)=x_1$, then $x_1$ is the $P_i$-best alternative and belongs to $I_1$. Hence $x_1$ is the leximax-best element of $I_1$, while every other interval has a strictly worse best element. Thus,

$$
\rho(P_i)=I_1.
$$

The case $\tau(P_i)=x_m$ is symmetric. Therefore,

$$
\rho(P_i)=I_1
\quad\Longleftrightarrow\quad
\tau(P_i)=x_1,
\qquad
\rho(P_i)=I_M
\quad\Longleftrightarrow\quad
\tau(P_i)=x_m.
$$

Thus, the induced top interval is consistent with the agent's top alternative. In particular, $\rho(P_i)=I_M$ for  the $n_m(P)$ agents whose top alternative is $x_m$, where $n_m(P)$ is defined in Section~\ref{sec:model}.

\begin{lemma}[Induced domain]\label{lem:induced_domain}
Under the common leximax extension, a strict linear order on $\{I_1,\ldots,I_M\}$ is induced by some $P_i\in\mathcal D$ if and only if it is single-dipped with respect to the chain order $<_L$.
\end{lemma}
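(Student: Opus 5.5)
The plan is to treat the two directions separately. The ``only if'' direction is already done: Lemma~\ref{lem:induced} shows that every responsive extension of any $P_i\in\mathcal D$, and leximax in particular, induces a single-dipped order on $(I_1,\ldots,I_M)$. All the work is in the converse. Here we take an arbitrary strict order $\triangleright$ on $\mathcal I$ that is single-dipped with respect to $<_L$ and build a $P_i\in\mathcal D$ whose leximax extension restricted to $\mathcal I$ is exactly $\triangleright$.

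First we describe $\triangleright$ combinatorially. Let $I_k$ be its worst interval. Single-dippedness fixes the order within the \emph{left arm}:
\[
I_1\triangleright I_2\triangleright\cdots\triangleright I_{k-1}\triangleright I_k .
\]
It also fixes the order within the \emph{right arm}:
\[
I_M\triangleright I_{M-1}\triangleright\cdots\triangleright I_{k+1}\triangleright I_k .
\]
So $\triangleright$ is completely determined by $k$ together with an interleaving (merge) of the two arms. The same description holds for single-dipped preferences on $X$: a $P_i\in\mathcal D$ with dip $x_d$ is exactly a merge of the chains $x_1P_i\cdots P_ix_{d-1}$ and $x_mP_i\cdots P_ix_{d+1}$, with $x_d$ placed last. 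The idea is to transport the interleaving of intervals to an interleaving of alternatives. Left-arm interval $I_j$ ($j<k$) will be represented by its left endpoint $x_j$. Right-arm interval $I_{j'}$ ($j'>k$) will be represented by its right endpoint $x_{j'+\ell-1}$.

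The construction runs as follows. Put the dip at some $x_d\in I_k$, for instance $x_d=x_k$. The ``outer'' alternatives are
\[
x_1,\ldots,x_{k-1}\quad\text{and}\quad x_{k+\ell},\ldots,x_m .
\]
We order these by copying the merge of the arms of $\triangleright$: $x_j$ is placed above $x_{j'+\ell-1}$ if and only if $I_j\triangleright I_{j'}$. All ``inner'' alternatives $x_k,\ldots,x_{k+\ell-1}$ go strictly below every outer alternative, ordered single-dippedly around $x_d$. This $P_i$ is single-dipped, because merges of the two monotone arms are allowed and placing the inner points at the bottom respects both arms. The boundary cases $k=1$ and $k=M$ are degenerate (only one arm) and are handled the same way.

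The step I expect to need the most care is verifying that leximax reproduces $\triangleright$. Within each arm, the comparisons are forced by adjacency via \eqref{eq:consec} and Lemma~\ref{lem:induced}, so only cross-arm comparisons remain.
\begin{itemize}
\item For $j<k$, every element of $I_j$ lying right of the dip is inner, hence at the bottom. Every element left of the dip is worse than $x_j$. So the $P_i$-best element of $I_j$ is $x_j$.
\item Symmetrically, for $j'>k$ the $P_i$-best element of $I_{j'}$ is $x_{j'+\ell-1}$.
\item These two best elements are distinct, so leximax decides $I_j$ versus $I_{j'}$ at the first position. By construction, that comparison agrees with $\triangleright$.
\item $I_k$ contains no outer alternative, so its best element is inner. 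It is therefore beaten at the first position by every other interval, making $I_k$ worst, as required.
\end{itemize}
I would check the inner-placement claim and the first-position argument carefully, including when $\ell$ is large relative to $m$ so that arms are short. Everything else is bookkeeping.
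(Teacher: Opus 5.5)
Your proposal is correct and follows the paper's route. The forward direction is Lemma~\ref{lem:induced}; for the converse you put the dip in $I_k$, note that the same-side comparisons are forced, and reproduce the cross-arm comparisons by freely ordering the best elements across the dip, which leximax compares at the first position. Your explicit step of pushing $x_k,\ldots,x_{k+\ell-1}$ to the bottom is what guarantees that the best elements are $x_j$ and $x_{j'+\ell-1}$, distinct and on opposite sides of the dip. The paper only asserts this (``can be chosen to be distinct''), so your construction supplies the justification it omits.
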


\begin{proof}
($\Rightarrow$) By Lemma~\ref{lem:induced}, every order induced by some $P_i\in\mathcal D$ under leximax is single-dipped with respect to $<_L$.

($\Leftarrow$) Let $\succ$ be a single-dipped order on $I_1<_L\cdots<_L I_M$, with dip $I_k$. Its shape is determined by the directions of the adjacent comparisons: $I_j\succ I_{j+1}$ for $j<k$ and $I_{j+1}\succ I_j$ for $j\ge k$. By \eqref{eq:consec}, the induced order $\succ_i$ satisfies $I_j\succ_i I_{j+1}\iff x_jP_ix_{j+\ell}$, so we may reproduce all of these directions by choosing, for each $j$, whether $x_jP_ix_{j+\ell}$ or $x_{j+\ell}P_ix_j$; a single-dipped $P_i$ with dip at $x_k$ (say) does so, and its induced order is then single-dipped with the same dip $I_k$ as $\succ$ (Lemma~\ref{lem:induced}). The same-side comparisons of $\succ_i$ then agree with those of $\succ$. 

It remains to match the opposite-side comparisons. Under leximax, two intervals lying on opposite sides of $I_k$ are ranked by the $P_i$-comparison of their best elements, which lie on opposite sides of $d(P_i)$; and single-dippedness imposes no restriction on comparisons between alternatives on opposite sides of $d(P_i)$. These best elements can be chosen to be distinct across intervals, so their $P_i$-order is a single free linear order that leximax transmits to the opposite-side interval comparisons. Fixing that free order to reproduce $\succ$ on the opposite-side pairs yields $\succ_i=\succ$. Hence $\succ$ is induced by some $P_i\in\mathcal D$.
\end{proof}

\begin{remark}\label{rem:leximax}
The interval results use only two properties of the common leximax extension. First, the induced most preferred interval contains the top alternative, so that

$$
\rho(P_i)=I_1\iff\tau(P_i)=x_1
$$

and

$$
\rho(P_i)=I_M\iff\tau(P_i)=x_m.
$$

We refer to this as \emph{peak consistency}; it is used in Theorem~\ref{thm:interval}(ii). Second, the induced preferences exhaust the single-dipped orders on the chain, as established in Lemma~\ref{lem:induced_domain}; this is used in Theorem~\ref{thm:interval}(i). Any responsive extension satisfying these properties would yield the same results.

Not every responsive extension has these properties. In particular, under the leximin extension, which ranks sets according to their worst element, a peak-$x_1$ preference whose dip lies in $I_1$ need not induce $I_1$ as the most preferred interval. 

For example, with $m=4$, consider $P_i:x_1P_ix_4P_ix_3P_ix_2.$ The interval $I_1=\{x_1,x_2\}$ contains both the best and the worst alternatives under $P_i$, while $I_M=\{x_3,x_4\}$ has worst element $x_3$. Thus, leximin ranks $I_M$ above $I_1$, so peak consistency fails. Consequently, the reduction in Theorem~\ref{thm:interval}(ii) to the number $n_m(P)$ of agents with top alternative $x_m$ does not go through. These two properties provide us with the interval characterization as the next theorem shows.
\end{remark}

\begin{theorem}[Interval rules]\label{thm:interval}
Suppose each agent evaluates the intervals through a common responsive extension
that is \emph{peak-consistent} and whose induced preferences
over the chain of intervals $I_1<_L\cdots<_L I_M$ form the full single-dipped domain on that
chain.  
\begin{enumerate}[label=(\roman*)]
\item If $f$ is unanimous and strategy-proof, then $\operatorname{range}(f)\subseteq\{I_1,I_M\}$.
\item $f$ is anonymous, unanimous, and strategy-proof if and only if there exists a threshold $q\in\{1,\ldots,n\}$ such that
\[
f(P)=I_M \ \text{ if } n_m(P)\ge q,
\qquad
f(P)=I_1 \ \text{ if } n_m(P)<q.
\]
\end{enumerate}
\end{theorem}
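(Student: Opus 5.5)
The plan is to reduce everything to the point-valued single-dipped problem on the chain $I_1<_L\cdots<_L I_M$. For each $P_i\in\mathcal D$, let $\succ_i$ be its induced order over intervals. By assumption and Lemma~\ref{lem:induced_domain}, the set of such orders is exactly the full single-dipped domain on the chain. Strategy-proofness of $f$ is understood with respect to these induced orders.

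\textbf{Part (i).} I would prove the range restriction directly through option sets. Fix $i$ and $P_{-i}$, and let $O_i=\{f(P_i',P_{-i}):P_i'\in\mathcal D\}$. Strategy-proofness implies that $f(P_i,P_{-i})$ is the $\succ_i$-best element of $O_i$.

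\emph{Step 1 (chain three-point lemma).} I first record the chain analogue of Lemma~\ref{lem:three_point}: a single-dipped order on the chain never ranks $I_v$ above both $I_u$ and $I_w$ when $u<v<w$. Hence, if $f(P)=I_k$ with $I_k$ interior, then $O_i$ lies entirely on one side of $I_k$, so $I_k=\min_{<_L}O_i$ or $I_k=\max_{<_L}O_i$.

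\emph{Step 2 (moving agents to extreme types).} Let $Q^{\downarrow}$ be the preference inducing $I_1\succ I_2\succ\cdots\succ I_M$, and $Q^{\uparrow}$ the reverse; richness guarantees both exist. If $I_k=\min O_i$, I replace $P_i$ by $Q^{\downarrow}$. Since $Q^{\downarrow}$ selects $\min O_i$, the outcome stays $I_k$. If instead $I_k=\max O_i$, I replace $P_i$ by $Q^{\uparrow}$, which selects $\max O_i$. Repeating agent by agent produces a profile $t$ of extreme types at which $f(t)=I_k$ is still interior.

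\emph{Step 3 (contradiction, main obstacle).} Restricted to $\{Q^{\downarrow},Q^{\uparrow}\}^n$, $f$ is monotone by Lemma~\ref{lem:chain}. Two observations then hold at $t$:
\begin{itemize}
\item Each $Q^{\uparrow}$ agent has $I_k=\max O_i$. Giving her a rich preference with top $I_M$ and dip placed just left of $I_k$ makes $I_k$ her worst element of $O_i$. Strategy-proofness then forces that deviation to move the outcome strictly below $I_k$.
\item Symmetrically, each $Q^{\downarrow}$ agent can be given a preference whose deviation moves the outcome strictly above $I_k$.
\end{itemize}
I would then switch agents sequentially and use monotonicity to transport these movements to a profile where an interior outcome is flanked on both sides within a single agent's option set. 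Step~1 forbids that. If this bookkeeping becomes delicate, the fallback is to cite the point-valued range theorem for the full single-dipped domain (\cite{manjunath2014}, \cite{barbera2012}), applied to the induced chain problem. That application is legitimate because the induced preferences exhaust the single-dipped orders on the chain. The one subtlety is that $f$ may depend on $P$ beyond $\succ_i$, but the option-set argument goes through unchanged.

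\textbf{Part (ii).} Assume (i), so $f$ maps into $\{I_1,I_M\}$. Peak consistency gives $I_1\succ_i I_M$ iff $\tau(P_i)=x_1$.

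\emph{Dependence only on tops.} Suppose $P_i$ and $P_i'$ have the same top but $f(P_i,P_{-i})\neq f(P_i',P_{-i})$. Both preferences rank the two outcomes identically, so one of them strictly gains by misreporting, contradicting strategy-proofness. Hence $f(P)=g(t(P))$, where $t_i=\downarrow$ if $\tau(P_i)=x_1$ and $t_i=\uparrow$ otherwise.

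\emph{Threshold form.} Lemma~\ref{lem:chain} applies to $g$ on the two-element chain, so $g$ is monotone. By anonymity, $g$ depends only on $n_m(P)$. Monotonicity makes $g$ weakly increasing in $n_m(P)$, so a threshold $q$ exists. Unanimity gives $g=I_1$ at $n_m=0$ and $g=I_M$ at $n_m=n$, which forces $q\in\{1,\ldots,n\}$.

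\emph{Converse.} A threshold rule is anonymous, and unanimous because $q\in\{1,\ldots,n\}$. It is monotone in types, hence strategy-proof by Lemma~\ref{lem:chain}. This uses peak consistency: each agent's comparison of $I_1$ versus $I_M$ is determined by her top.
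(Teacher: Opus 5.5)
Your part (ii), and the fallback you give for part (i), coincide with the paper's proof. The paper obtains (i) simply by citing the point-valued endpoint theorem for the full single-dipped domain (\cite{barbera2012}, \cite{manjunath2014}) and applying it to the induced chain, and it proves (ii) exactly as you do. You flag a subtlety that the paper passes over, and the argument you already use in (ii) settles it. If $P_i$ and $P_i'$ induce the same strict order on $\mathcal I$, strategy-proofness forces $f(P_i,P_{-i})=f(P_i',P_{-i})$. Hence $f$ factors through a strategy-proof rule on the full single-dipped domain of the chain. Peak consistency then turns the peak-based unanimity axiom into unanimity on the chain, and the citation applies verbatim.

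The direct route of Steps 1--3, however, does not prove (i).

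\begin{itemize}
\item Steps 1 and 2 are fine. Step 1 even gives $|O_i|\le 2$, with $\min O_i$ and $\max O_i$ attained by $Q^{\downarrow}$ and $Q^{\uparrow}$.
\item The conclusions of the two bullets in Step 3 do not hold in general. A $Q^{\uparrow}$ agent can push the outcome below $I_k$ only if $O_i\neq\{I_k\}$, and nothing ensures this. Unanimity is never used in Steps 1--3, yet the constant rule $f\equiv I_k$ is strategy-proof.
\item A smaller point: to make $I_k$ the worst element of $O_i\subseteq\{I_1,\ldots,I_k\}$, the chain dip must be at $I_k$ itself, not to its left.
\end{itemize}

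More importantly, the ``transport'' step is the entire content of the range theorem, and it is not bookkeeping on $\{Q^{\downarrow},Q^{\uparrow}\}^n$. Take $n=5$ and the extreme-type rule that gives $I_1$, $I_k$, $I_M$ according as the number of $Q^{\uparrow}$ agents is at most $1$, equal to $2$ or $3$, or at least $4$. This rule is monotone and unanimous, and at no profile of that cube can one agent raise $I_k$ while another lowers it.

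So the contradiction must be found off the cube. Let $R_1$ be a preference with chain dip $I_k$. Strategy-proofness then forces $f(R_1,Q^{\uparrow},Q^{\uparrow},Q^{\downarrow},Q^{\downarrow})=I_k$. At this profile agent~4 can raise the outcome to $I_M$ and agent~3 can lower it to $I_1$. Now give agent~4 a dip-$I_k$ preference $R_4$ with $I_1$ above $I_M$, and agent~3 a dip-$I_k$ preference $R_3$ with $I_M$ above $I_1$. Then no value of $f(R_1,Q^{\uparrow},R_3,R_4,Q^{\downarrow})$ is consistent with strategy-proofness.

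Reaching such a configuration in general is the missing idea. As written, your part (i) rests on the citation, just as the paper's does.
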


\begin{proof}
(i) On a finite line with the full single-dipped domain, every unanimous strategy-proof rule selects one of the two extreme alternatives (\cite{barbera2012}); this is the finite counterpart of the continuum result of \cite{manjunath2014}. Since the agents' induced preferences over $I_1 <_L \cdots <_L I_M$ form the full single-dipped domain on this chain, whose extremes are $I_1$ and $I_M$, the restriction gives $\mathrm{range}(f) \subseteq \{I_1, I_M\}$.

(ii) By part~(i) the range is contained in the two-element chain $I_1<_L I_M$. For
these two outcomes, an agent's preference is determined by her most preferred
interval $\rho(P_i)\in\{I_1,I_M\}$, so $f$ depends on the profile only through the
induced types
\[
t_i=\begin{cases}\downarrow, & \rho(P_i)=I_1,\\ \uparrow, & \rho(P_i)=I_M.\end{cases}
\]
By Lemma~\ref{lem:chain}(i), strategy-proofness is equivalent to monotonicity, so
switching an agent's type from $\downarrow$ to $\uparrow$ can only weakly raise the
outcome in $<_L$. By anonymity, $f$ depends on the type profile only through the
number of type-$\uparrow$ agents; by peak-consistency this number equals $n_m(P)$,
so $f$ is non-decreasing in $n_m(P)$. When $n_m(P)=0$ every peak is $x_1$, so
unanimity gives $f(P)=I_1$; when $n_m(P)=n$ every peak is $x_m$, so unanimity gives
$f(P)=I_M$. Since $f$ is non-decreasing in the integer $n_m(P)$, there is a threshold
$q\in\{1,\ldots,n\}$ with $f(P)=I_M$ if $n_m(P)\ge q$ and $f(P)=I_1$ otherwise.
Conversely, every threshold rule of this form is anonymous, unanimous, and monotone
in $n_m(P)$, hence strategy-proof by Lemma~\ref{lem:chain}(i).
\end{proof}

\begin{example}[Threshold interval rules]\label{ex:threshold}
Fix the leximax extension, so $\rho(P_i)$ is the extreme interval on the side of agent $i$'s peak, and let $n_m(P)$ denote the number of agents whose peak is $x_m$. The anonymous, unanimous, strategy-proof interval rules are precisely the threshold rules

$$
f(P)=I_M\quad\text{if }n_m(P)\geq q,
\qquad
f(P)=I_1\quad\text{otherwise},
$$

one for each threshold $q\in \{1,\ldots,n\}$. At $q=1$, the rule selects $I_M$ whenever at least one agent peaks at $x_m$; at $q=n$, it selects $I_M$ only when every agent peaks at $x_m$. The majority threshold

$$
q=\left\lceil\frac{n+1}{2}\right\rceil
$$

selects the extreme interval on the side favored by a strict majority of agents. Every such rule is unanimous: when all agents peak at $x_1$, it selects $I_1$, and when all agents peak at $x_m$, it selects $I_M$. In particular, no such rule selects an interior interval.
\end{example}

\begin{remark}\label{rem:gap}
Anonymity alone does not justify reducing a profile to the count $n_m(P)$. Two single-dipped preferences may have the same most preferred interval while differing in their dip and hence in their ranking of the full chain

$$
I_1<_L\cdots<_L I_M.
$$

An anonymous rule may therefore respond to these differences. The reduction to $n_m(P)$ becomes valid only after Theorem~\ref{thm:interval}(i) restricts the range to the two extreme intervals, because on a two-element range the agent's most preferred interval determines her comparison of the two feasible outcomes. On the full chain, by contrast, the interior intervals and their relative rankings cannot in general be summarized by the peak count alone.
\end{remark}

\begin{remark}\label{rem:dominance_special}
The interior interval in Example~\ref{ex:interior} arises specifically from responsive dominance, under which an outcome is strictly preferred only when it is preferred under every responsive extension. If instead one requires that no agent gain under \emph{any} responsive extension of her preference, then in particular no agent can gain under the fixed leximax extension. Theorem~\ref{thm:interval}(i) therefore applies, and the range is contained in

$$
\{I_1,I_M\}.
$$

Thus, the additional set-valued freedom in the responsive-dominance benchmark comes from the incomparabilities permitted by responsive dominance, rather than from set-valuedness per se.
\end{remark}

\section{Set valued social choice}\label{sec:faithful}

Let the outcome be any nonempty subset of $X$, and compare subsets by responsive dominance $\succeq_i^R$. A set rule is a map
\[
G:\mathcal D^n\to 2^X\setminus\{\emptyset\},
\]
with strategy-proofness and anonymity understood with respect to responsive dominance, as in Axioms~\ref{ax:sp} and~\ref{ax:anon}. We impose one additional requirement.

\begin{axiom}[Top-selection]\label{ax:top}
$G$ satisfies \textbf{top-selection} if
\[
G(P)\subseteq\{\tau(P_i):i\in N\}.
\]
Thus, every selected alternative must be the most preferred alternative of at least one agent.
\end{axiom}

Top-selection is an efficiency requirement of a familiar kind: it rules out the selection of an alternative that no agent ranks first. On the single-dipped domain, every peak is an endpoint, and hence top-selection implies
\[
G(P)\subseteq\{x_1,x_m\}.
\]
Since outcomes are nonempty,
\[
G(P)\in
\bigl\{\{x_1\},\{x_1,x_m\},\{x_m\}\bigr\}.
\]
Top-selection also implies unanimity. Indeed, if every agent has peak $x_1$, then $x_1$ is the only available top alternative, and non-emptiness implies
\[
G(P)=\{x_1\}.
\]
The case in which all agents have peak $x_m$ is symmetric.

Only the location of the peaks at the endpoints is used in what follows, rather than the full single-dipped structure. We therefore state the characterization on the endpoint-peaked domain
\[
\mathcal E=\{P_i:\tau(P_i)\in\{x_1,x_m\}\}\supseteq\mathcal D,
\]
and compare $\mathcal E$ with the single-dipped domain in Proposition~\ref{prop:maximal}.

\begin{remark}[Efficiency does not suffice]\label{rem:eff}
Pareto efficiency cannot replace top-selection, because an interior alternative can be Pareto efficient. Let $m=4$ and consider
\[
P_1:x_1P_1x_2P_1x_3P_1x_4,
\qquad
P_2:x_4P_2x_3P_2x_2P_2x_1.
\]
Agent~2 prefers $x_2$ to $x_1$, while agent~1 prefers $x_2$ to both $x_3$ and $x_4$. Thus, no alternative is strictly preferred to $x_2$ by both agents, so $x_2$ is Pareto efficient. Efficiency therefore leaves interior alternatives eligible, whereas top-selection excludes $x_2$, since it is not the peak of any agent at this profile.
\end{remark}

The three possible outcomes form the chain
\[
\{x_1\}<_L\{x_1,x_m\}<_L\{x_m\}.
\]
On this chain, responsive dominance is complete for every endpoint-peaked preference, and the induced preference is determined solely by the location of the peak. If $\tau(P_i)=x_1$, then
\[
\{x_1\}\succ_i^R\{x_1,x_m\}\succ_i^R\{x_m\},
\]
whereas if $\tau(P_i)=x_m$, then
\[
\{x_m\}\succ_i^R\{x_1,x_m\}\succ_i^R\{x_1\}.
\]
Indeed, when the peak is $x_1$, adding $x_m$ to $\{x_1\}$ worsens the set by G\"ardenfors monotonicity, while removing the preferred alternative $x_1$ from $\{x_1,x_m\}$ leaves the singleton $\{x_m\}$ strictly worse. The case of a peak at $x_m$ is symmetric. Thus, the two-element set is every agent's second-best outcome, and the ranking of the three feasible sets depends only on the peak.

Because responsive dominance is complete on this three-element range, the same comparisons obtain under every responsive extension. Hence, imposing the stronger requirement that an agent gain under no responsive extension would not change any of the results below.

\begin{lemma}[Peak-only]\label{lem:peakonly}
If $G$ satisfies top-selection and is strategy-proof on $\mathcal E^n$, then $G(P)$ depends only on the peak profile
\[
(\tau(P_1),\ldots,\tau(P_n)).
\]
\end{lemma}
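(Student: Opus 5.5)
By top-selection and non-emptiness, $G(P)$ lies in the three-element chain $\{x_1\}<_L\{x_1,x_m\}<_L\{x_m\}$ at every profile in $\mathcal E^n$. The discussion preceding the lemma shows that, for any $P_i\in\mathcal E$, responsive dominance restricted to this chain is complete and determined by the peak. An agent with $\tau(P_i)=x_1$ ranks the chain against $<_L$, and an agent with $\tau(P_i)=x_m$ ranks it along $<_L$. The plan is to show that changing one agent's preference while keeping her peak fixed cannot change the outcome. Iterating this over agents then gives peak-only dependence.

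\emph{Step 1 (one agent at a time).} Fix $i$, $P_{-i}$, and two preferences $P_i,P_i'\in\mathcal E$ with $\tau(P_i)=\tau(P_i')$, say both equal to $x_1$. Write $A=G(P_i,P_{-i})$ and $B=G(P_i',P_{-i})$; both lie in the chain.
\begin{itemize}
\item Apply strategy-proofness at the true preference $P_i$ with misreport $P_i'$. This gives $B\not\succ_i^R A$. Since $\succ_i^R$ is complete on the chain and both outcomes lie in it, this yields $A\succeq_i^R B$, that is, $A\leq_L B$.
\item Apply strategy-proofness at the true preference $P_i'$ with misreport $P_i$. Since the chain ranking induced by $P_i'$ is the same as that induced by $P_i$ (both peak at $x_1$), this gives $B\leq_L A$.
\end{itemize}
Hence $A=B$. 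The case $\tau=x_m$ is symmetric.

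\emph{Step 2 (chaining).} Take two profiles $P,Q\in\mathcal E^n$ with the same peak profile. Replace $P_1$ by $Q_1$, then $P_2$ by $Q_2$, and so on. Each intermediate profile lies in $\mathcal E^n$, and each step changes one agent's preference without changing her peak. By Step 1 the outcome is unchanged at every step, so $G(P)=G(Q)$.

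\emph{Main obstacle.} The only delicate point is the completeness used in Step 1. Strategy-proofness only rules out strict dominance, so $B\not\succ_i^R A$ becomes $A\succeq_i^R B$ only because both outcomes lie in the three-element chain, on which $\succ_i^R$ is a linear order. I would verify this explicitly. For peak $x_1$, G\"ardenfors monotonicity gives $\{x_1\}\succ\{x_1,x_m\}$ under every responsive extension. The other needed comparison, $\{x_1,x_m\}\succ\{x_m\}$, also follows from G\"ardenfors monotonicity, applied to $A=\{x_m\}$ with $z=x_1$ added, since $x_1P_ix_m$. Transitivity then ranks $\{x_1\}$ above $\{x_m\}$. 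These are exactly the comparisons recorded before the lemma, so the point is only to invoke them carefully. Top-selection matters here: it guarantees that no incomparable outcome outside the chain can arise.
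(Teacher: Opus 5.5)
Your proof is correct and follows the paper's argument: two preferences with the same peak induce the same complete ranking of the three-element chain, so strategy-proofness forces equal outcomes when one agent's report changes. Your explicit one-agent-at-a-time chaining and your check of completeness via G\"ardenfors monotonicity simply spell out what the paper compresses into ``consequently'' and ``by the preceding observation.''
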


\begin{proof}
Fix an agent $i$ and the reports $P_{-i}$ of the other agents, and let $P_i$ and $P_i'$ be two preferences with the same peak. By the preceding observation, these two preferences induce the same ranking of the three feasible outcomes
\[
\{x_1\},\quad \{x_1,x_m\},\quad \{x_m\}.
\]
Suppose that
\[
G(P_i,P_{-i})\neq G(P_i',P_{-i}).
\]
Since the two induced preferences are identical and complete on the feasible outcomes, one of these two outcomes is strictly preferred to the other. The report yielding the less preferred outcome is therefore a profitable deviation to the other report, contradicting strategy-proofness. Hence
\[
G(P_i,P_{-i})=G(P_i',P_{-i}).
\]
Thus, $G$ depends on $P_i$ only through $\tau(P_i)$, and consequently on $P$ only through the peak profile.
\end{proof}

By Lemma~\ref{lem:peakonly}, we may write the outcome as a function of the coalition
\[
N_1(P)=\{i:\tau(P_i)=x_1\},
\]
with
\[
N_m(P)=N\setminus N_1(P)
\]
the coalition of agents whose peak is $x_m$. A \textbf{committee} is a nonempty family $\mathcal W\subseteq2^N$ that is \emph{monotone}: if $C\in\mathcal W$ and $C\subseteq C'$, then $C'\in\mathcal W$. A committee is the winning-coalition family of a monotone binary vote. The theorem below represents the selection of each endpoint by such a committee, and hence gives a voting-by-committees characterization in the sense of \cite{barbera1991}, specialized to the two endpoint alternatives $x_1$ and $x_m$. The covering condition guarantees that at least one endpoint is selected, while the boundary conditions ensure top-selection. We now formally define the rule.

\begin{definition}[Voting by Committees Rule]
Let $N=\{1,\dots,n\}$ and $X=\{x_1,x_m\}$ be the two endpoints. 
A set rule $G:\mathcal{E}^n \to 2^X\setminus\{\emptyset\}$ is a 
\textbf{voting by committees rule} if there exist two monotone committees 
$\mathcal{W}_1,\mathcal{W}_m \subseteq 2^N$ satisfying:
\begin{enumerate}
    \item[(i)] \textbf{Boundary conditions:} $\emptyset \notin \mathcal{W}_1$ and $\emptyset \notin \mathcal{W}_m$.
    \item[(ii)] \textbf{Covering condition:} For every coalition $C\subseteq N$,
    \[
    C\in \mathcal{W}_1 \quad \text{or} \quad N\setminus C\in \mathcal{W}_m,
    \]
    such that for every profile $P\in\mathcal{E}^n$,
    \[
    x_1\in G(P) \iff N_1(P)\in\mathcal{W}_1,
    \qquad
    x_m\in G(P) \iff N_m(P)\in\mathcal{W}_m,
    \]
    where $N_1(P)=\{i:\tau(P_i)=x_1\}$ and $N_m(P)=N\setminus N_1(P)$.\end{enumerate}
\end{definition}

We provide an example of this set rule. 

\begin{example}[A Non-Anonymous Rule]
Let $N=\{1,2,3\}$. Define committees:
\[
\mathcal{W}_1=\{C\subseteq N:1\in C\},\qquad 
\mathcal{W}_m=\{C\subseteq N:C\neq\emptyset\}.
\]
Both satisfy $\emptyset\notin \mathcal{W}_1,\emptyset\notin \mathcal{W}_m$. 
For any $C\subseteq N$, if $1\notin C$, then $1\in N\setminus C$, so $N\setminus C\in\mathcal{W}_m$; hence the covering condition holds. 
The resulting rule selects:
\[
x_1\in G(P)\iff 1\in N_1(P),\qquad 
x_m\in G(P)\iff N_m(P)\neq\emptyset.
\]
Thus, $x_1$ is chosen  when voter 1 peaks at $x_1$; $x_m$ is chosen whenever at least one voter peaks at $x_m$.
\end{example}

\begin{theorem}[Characterization]\label{thm:committee}
On the endpoint-peaked domain $\mathcal{E}^n$, a set rule 
$G:\mathcal{E}^n \to 2^X\setminus\{\emptyset\}$ satisfies 
\textit{top-selection} and is \textit{strategy-proof} 
if and only if it is a \textit{voting by committees rule} 
in the sense of Definition~1.
\end{theorem}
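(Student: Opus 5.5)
The plan is to reduce everything to the three-element chain $\{x_1\}<_L\{x_1,x_m\}<_L\{x_m\}$ and apply Lemma~\ref{lem:chain} there. Agents with peak $x_1$ have type $\downarrow$ and agents with peak $x_m$ have type $\uparrow$; this is exactly the induced ranking recorded just before Lemma~\ref{lem:peakonly}. The one translation needed is between chain monotonicity and membership monotonicity. For $A$ in the chain, $x_1\in A$ iff $A\leq_L\{x_1,x_m\}$, and $x_m\in A$ iff $\{x_1,x_m\}\leq_L A$. Hence a weak downward move in the chain can only add $x_1$ and can only remove $x_m$. Conversely, a change that weakly adds $x_1$ and weakly removes $x_m$ is a weak downward move.

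\emph{Necessity.} Let $G$ satisfy top-selection and strategy-proofness on $\mathcal E^n$.
\begin{enumerate}[label=(\alph*)]
\item By top-selection and non-emptiness, $G(P)$ lies in the three-element chain.
\item By Lemma~\ref{lem:peakonly}, $G(P)=g(N_1(P))$ for some $g:2^N\to$ chain.
\item Strategy-proofness of $G$ implies strategy-proofness of $g$ viewed as a type rule, since a type deviation is realized by some report in $\mathcal E$. By Lemma~\ref{lem:chain}, $g$ is then monotone. Switching an agent from $\uparrow$ to $\downarrow$ enlarges $N_1$, so $C\subseteq C'$ implies $g(C')\leq_L g(C)$.
\item Define $\mathcal W_1=\{C: x_1\in g(C)\}$ and $\mathcal W_m=\{D: x_m\in g(N\setminus D)\}$. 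Then $x_1\in G(P)\iff N_1(P)\in\mathcal W_1$ and $x_m\in G(P)\iff N_m(P)\in\mathcal W_m$.
\item Both families are monotone by the translation above. For $\mathcal W_m$, enlarging $D$ shrinks $N\setminus D$, which moves $g$ weakly up and so preserves $x_m$.
\item Both families are nonempty. By the unanimity consequence of top-selection, $g(N)=\{x_1\}$, so $N\in\mathcal W_1$; and $g(\emptyset)=\{x_m\}$, so $N\in\mathcal W_m$.
\item The boundary conditions hold. When $N_1=\emptyset$, every agent peaks at $x_m$, so top-selection excludes $x_1$ and $\emptyset\notin\mathcal W_1$. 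Symmetrically, $\emptyset\notin\mathcal W_m$.
\item The covering condition is just non-emptiness of $g(C)$.
\end{enumerate}

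\emph{Sufficiency.} Let $G$ be defined by committees $\mathcal W_1,\mathcal W_m$ satisfying the boundary and covering conditions.
\begin{enumerate}[label=(\alph*)]
\item Covering gives $G(P)\neq\emptyset$.
\item If $x_1\in G(P)$, then $N_1(P)\in\mathcal W_1$. By the boundary condition $N_1(P)\neq\emptyset$, so some agent peaks at $x_1$; the same argument applies to $x_m$. This gives top-selection and puts $G(P)$ in the chain.
\item $G$ depends only on peaks, and each agent's ranking of the chain depends only on her peak. So a deviation to any $P_i'\in\mathcal E$ yields the same outcome as the corresponding type deviation, and strategy-proofness of $G$ reduces to strategy-proofness of the type rule.
\item By Lemma~\ref{lem:chain}, it suffices to show the type rule is monotone. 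Switching agent $i$ from $\uparrow$ to $\downarrow$ enlarges $N_1$ and shrinks $N_m$. By monotonicity of the committees, $x_1$ can only enter and $x_m$ can only leave. By the translation above, this is a weak downward move in the chain.
\end{enumerate}

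The step I expect to need the most care is this last one, in both directions. One has to verify that the two membership indicators, each monotone separately, jointly yield chain monotonicity. This is where the fact that $\{x_1,x_m\}$ sits strictly between the singletons is used.
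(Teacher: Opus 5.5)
Your proposal is correct and follows essentially the same route as the paper: reduce via Lemma~\ref{lem:peakonly} to a rule of $N_1(P)$ on the chain $\{x_1\}<_L\{x_1,x_m\}<_L\{x_m\}$, invoke Lemma~\ref{lem:chain} for monotonicity, translate chain monotonicity into monotonicity of the two membership indicators, and define $\mathcal W_1$ and $\mathcal W_m$ exactly as the paper does. Your explicit checks that both committees are nonempty and that strategy-proofness transfers between $G$ and the type rule are details the paper leaves implicit; they make the argument more complete without changing it.
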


\begin{proof}
By Lemma~\ref{lem:peakonly}, $G(P)$ depends on $P$ only through $N_1(P)$. Write $G(C)$ for the common value of $G$ on all profiles $P$ with $N_1(P)=C$; thus $G$ is a function of the coalition $C\subseteq N$, with values in the chain
\[
\{x_1\}<_L\{x_1,x_m\}<_L\{x_m\}.
\]
An agent with peak $x_1$ has type $\downarrow$ in the sense of Lemma~\ref{lem:chain}, while an agent with peak $x_m$ has type $\uparrow$; enlarging $C$ therefore adds type-$\downarrow$ agents. By Lemma~\ref{lem:chain}(i), strategy-proofness is equivalent to monotonicity, so enlarging $C$ can only weakly decrease the outcome in $<_L$. Write
\[
A(C)=\mathbf{1}[x_1\in G(C)]\quad\text{and}\quad B(C)=\mathbf{1}[x_m\in G(C)].
\]
Because $\mathcal E$ contains preferences peaking at each endpoint, every coalition $C\subseteq N$ arises as $N_1(P)$ for some profile $P\in\mathcal E^n$: assign peak $x_1$ to the agents in $C$ and peak $x_m$ to the rest. Non-emptiness of $G$ at that profile gives
\[
A(C)\vee B(C)=1,
\]
and since every $C\subseteq N$ so arises, this holds for all $C\subseteq N$. Along the chain,
\[
(A,B)=(1,0)\ \text{for}\ \{x_1\},\qquad
(A,B)=(1,1)\ \text{for}\ \{x_1,x_m\},\qquad
(A,B)=(0,1)\ \text{for}\ \{x_m\}.
\]
Thus, as the outcome rises in $<_L$, $A$ weakly decreases and $B$ weakly increases. Since enlarging $C$ can only weakly lower the outcome, $A$ is weakly increasing in $C$ and $B$ is weakly decreasing in $C$. Define
\[
\mathcal W_1=\{C\subseteq N:A(C)=1\}
\qquad\text{and}\qquad
\mathcal W_m=\{C\subseteq N:B(N\setminus C)=1\}.
\]
The monotonicity of $A$ and $B$ implies that both $\mathcal W_1$ and $\mathcal W_m$ are committees. Moreover, $A(C)\vee B(C)=1$ for every $C\subseteq N$ is precisely the covering condition
\[
C\in\mathcal W_1
\quad\text{or}\quad
N\setminus C\in\mathcal W_m.
\]
Finally, top-selection implies the boundary conditions. If $N_1(P)=\emptyset$, no agent has peak $x_1$, so top-selection gives $x_1\notin G(P)$; hence $\emptyset\notin\mathcal W_1$. Similarly, if $N_m(P)=\emptyset$, top-selection gives $x_m\notin G(P)$, so $\emptyset\notin\mathcal W_m$.

Conversely, suppose that $\mathcal W_1$ and $\mathcal W_m$ satisfy the committee, boundary, and covering conditions, and define $G$ by
\[
x_1\in G(P)\iff N_1(P)\in\mathcal W_1,
\qquad
x_m\in G(P)\iff N_m(P)\in\mathcal W_m.
\]
The covering condition guarantees that $G(P)$ is nonempty. Since $\mathcal W_1$ is a committee, membership of $N_1(P)$ in $\mathcal W_1$ is preserved when $N_1(P)$ is enlarged; similarly, membership of $N_m(P)$ in $\mathcal W_m$ is preserved when $N_m(P)$ is enlarged. Hence $A$ is non-decreasing and $B$ is non-increasing in $C$, so $G$ is monotone on the chain, and Lemma~\ref{lem:chain}(i) gives strategy-proofness.

For top-selection, if $x_1\in G(P)$, then $N_1(P)\in\mathcal W_1$. Since $\emptyset\notin\mathcal W_1$, we have $N_1(P)\neq\emptyset$, so some agent has peak $x_1$. The argument for $x_m$ is symmetric. Thus every selected alternative is the peak of some agent, and $G$ satisfies top-selection.
\end{proof}

\begin{corollary}[Anonymous rules]\label{cor:quota}
A set rule
\[
G:\mathcal E^n\to2^X\setminus\{\emptyset\}
\]
is anonymous, top-selecting, and strategy-proof if and only if there exist integers
\[
1\leq q_1\leq q_2\leq n
\]
such that
\[
G(P)=
\begin{cases}
\{x_1\}, & n_m(P)<q_1,\\
\{x_1,x_m\}, & q_1\leq n_m(P)<q_2,\\
\{x_m\}, & n_m(P)\geq q_2.
\end{cases}
\]
\end{corollary}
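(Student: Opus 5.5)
The plan is to derive the corollary from Theorem~\ref{thm:committee}, using anonymity to collapse each committee to a quota. I would work with the reduced rule $G(C)$, $C=N_1(P)$, from the proof of Theorem~\ref{thm:committee}, and with the indicators $A(C)=\mathbf 1[x_1\in G(C)]$ and $B(C)=\mathbf 1[x_m\in G(C)]$.

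\emph{Necessity.} Let $G$ be anonymous, top-selecting, and strategy-proof. Lemma~\ref{lem:peakonly} says that $G$ depends only on the peak profile. Anonymity then says it depends only on $|N_m(P)|=n_m(P)$: permuting agents maps any coalition $C$ to any other coalition of the same size. So we may write $G(P)=g(k)$ with $k=n_m(P)\in\{0,\dots,n\}$, taking values in the chain $\{x_1\}<_L\{x_1,x_m\}<_L\{x_m\}$.

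Monotonicity from Lemma~\ref{lem:chain} gives that switching one agent from type $\downarrow$ to $\uparrow$ weakly raises the outcome. Hence $g$ is non-decreasing in $k$. Top-selection, through unanimity, fixes the ends: $g(0)=\{x_1\}$ and $g(n)=\{x_m\}$. Define
\[
q_1=\min\{k:x_m\in g(k)\},\qquad q_2=\min\{k:g(k)=\{x_m\}\}.
\]
Since $g(0)=\{x_1\}$ and $g(n)=\{x_m\}$, we get $1\le q_1\le q_2\le n$. Because $g$ is monotone on the chain, $g(k)=\{x_1\}$ for $k<q_1$, $g(k)=\{x_1,x_m\}$ for $q_1\le k<q_2$, and $g(k)=\{x_m\}$ for $k\ge q_2$. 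This is exactly the displayed form.

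\emph{Sufficiency.} Given $1\le q_1\le q_2\le n$, the rule is a function of $n_m(P)$ only, so it is anonymous. It is non-decreasing in $n_m(P)$ along $<_L$, so it is monotone in each agent's type and hence strategy-proof by Lemma~\ref{lem:chain}. Strategy-proofness is evaluated with respect to responsive dominance, which is complete on the three outcomes and depends only on the peak, so Lemma~\ref{lem:chain} applies. For top-selection: $x_m$ is chosen only if $n_m(P)\ge q_1\ge1$, so some agent peaks at $x_m$. $x_1$ is chosen only if $n_m(P)<q_2\le n$, so some agent peaks at $x_1$.

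Alternatively, sufficiency can go through Theorem~\ref{thm:committee} with the committees
\[
\mathcal W_1=\{C:|C|\ge n-q_2+1\},\qquad \mathcal W_m=\{C:|C|\ge q_1\}.
\]
These exclude $\emptyset$, and covering holds because $|C|<n-q_2+1$ implies $|N\setminus C|\ge q_2\ge q_1$.

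I do not expect a real obstacle. The only delicate step is the reduction to a function of $n_m(P)$. Remark~\ref{rem:gap} warns that anonymity alone does not give this reduction, so it must come after Lemma~\ref{lem:peakonly}. Once the rule depends only on the peak profile, the argument is a direct monotonicity bookkeeping on $\{0,\dots,n\}$.
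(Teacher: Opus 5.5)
Your proof is correct. For necessity it takes a somewhat different route from the paper.

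\textbf{The paper's route.} The paper derives the corollary from Theorem~\ref{thm:committee}. Anonymity makes the two committees depend only on coalition size, so $\mathcal W_m=\{C:|C|\ge q_1\}$ and $\mathcal W_1=\{C:|C|\ge n-q_2+1\}$ are quota families. The bounds $q_1\ge 1$ and $q_2\le n$ are then the boundary conditions. The ordering $q_1\le q_2$ is the covering condition evaluated at $|C|=n-q_2$.

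\textbf{Your route.} You bypass the committee representation and work directly on the reduced function $g(k)$, where $k=n_m(P)$. Lemma~\ref{lem:chain} makes $g$ a non-decreasing map from $\{0,\dots,n\}$ into the three-element chain. Unanimity pins down $g(0)$ and $g(n)$. You then read off $q_1,q_2$ as first-passage points. This is the argument of Theorem~\ref{thm:interval}(ii), run on a three-element chain instead of a two-element one.

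\textbf{What each buys.} Your version is self-contained. It makes explicit two steps the paper leaves implicit:
\begin{itemize}
\item the permutation argument that reduces a peak-only anonymous rule to a function of $|N_1(P)|$;
\item why the thresholds are ordered, with $q_1\le q_2$ coming from monotonicity of $g$ rather than from covering.
\end{itemize}
The paper's version is shorter once the theorem is available. It also exhibits the quotas directly as the two committees, which is the interpretation the corollary is meant to convey. Your alternative sufficiency argument through Theorem~\ref{thm:committee} recovers exactly that identification.
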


\begin{proof}
Anonymity implies that the two committees in Theorem~\ref{thm:committee} depend only on coalition cardinality. Hence, each is a quota family. Since
\[
n_m(P)=|N_m(P)|
\qquad\text{and}\qquad
|N_1(P)|=n-n_m(P),
\]
there are thresholds at which $x_m$ becomes selected and $x_1$ ceases to be selected. The boundary conditions imply that neither endpoint is selected when its supporting coalition is empty, while the covering condition ensures that at least one endpoint is selected at every profile. Thus, for some
\[
1\leq q_1\leq q_2\leq n,
\]
$x_m$ is selected if and only if $n_m(P)\geq q_1$, while $x_1$ is selected if and only if $n_m(P)<q_2$. This gives the stated form.
\end{proof}

\begin{example}[Committee rules with $n=3$]\label{ex:committee}
Let $N=\{1,2,3\}$.

\emph{(a) An anonymous rule.}
Take thresholds $q_1=1$ and $q_2=3$. Equivalently,
\[
\mathcal W_1=\{C:|C|\geq1\},
\qquad
\mathcal W_m=\{C:|C|\geq1\}.
\]
Thus, an endpoint is selected whenever at least one agent has that endpoint as her peak. If all three agents peak at $x_1$, then
\[
G(P)=\{x_1\};
\]
if all three peak at $x_m$, then
\[
G(P)=\{x_m\};
\]
and whenever the agents disagree,
\[
G(P)=\{x_1,x_m\}.
\]
This is a ``consensus-or-compromise'' rule: it selects a singleton endpoint under unanimous support and the two-element set whenever the peaks are split.

\emph{(b) A non-anonymous rule.}
Take
\[
\mathcal W_1=\{C:1\in C\}
\qquad\text{and}\qquad
\mathcal W_m=\{C:C\neq\emptyset\}.
\]
Both are committees, and
\[
\emptyset\notin\mathcal W_1,
\qquad
\emptyset\notin\mathcal W_m.
\]
The covering condition also holds: if $1\notin C$, then $1\in N\setminus C$, so $N\setminus C\neq\emptyset$ and hence $N\setminus C\in\mathcal W_m$. Here $x_1$ is selected if and only if agent~1's peak is $x_1$, while $x_m$ is selected if and only if at least one agent has peak $x_m$. Thus, if agent~1 peaks at $x_1$ and at least one other agent peaks at $x_m$,
\[
G(P)=\{x_1,x_m\};
\]
if agent~1 peaks at $x_m$, then
\[
G(P)=\{x_m\}
\]
regardless of the other agents' peaks. Agent~1 is therefore decisive for the selection of $x_1$, but has no veto over $x_m$. Strategy-proofness follows from the monotonicity of the induced rule on the chain.
\end{example}

\begin{proposition}[Maximal domain]\label{prop:maximal}
The endpoint-peaked domain $\mathcal E$ is maximal, under set inclusion, for the peak-based characterization of Theorem~\ref{thm:committee}. In particular, the same characterization applies on every subdomain of $\mathcal E$, whereas if a domain contains a preference with an interior peak, top-selection forces an outcome outside the set $\bigl\{\{x_{1}\},\{x_{1},x_{m}\},\{x_{m}\}\bigr\}$, so no representation of the form in Theorem~\ref{thm:committee} exists. Thus, single-dippedness is sufficient but not necessary: the characterization uses only the location of peaks at the endpoints.
\end{proposition}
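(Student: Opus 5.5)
The proof has two halves: the characterization persists on every subdomain of $\mathcal E$, and it fails on every domain containing a preference with an interior peak. Sufficiency on subdomains comes from auditing the proof of Theorem~\ref{thm:committee}. Failure comes from exhibiting a profile at which top-selection forces an outcome outside the three-element chain.

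\emph{Subdomains.} Let $\mathcal D'\subseteq\mathcal E$ contain at least one preference with peak $x_1$ and one with peak $x_m$. If it does not, every agent has the same peak, and top-selection with non-emptiness pins down $G$ trivially. I will check that each step of the proof of Theorem~\ref{thm:committee} uses only two features of the domain.

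The first feature is that peaks lie in $\{x_1,x_m\}$. This alone forces $G(P)\in\{\{x_1\},\{x_1,x_m\},\{x_m\}\}$ and gives the complete responsive-dominance ranking on that chain, which depends only on the peak. Hence Lemma~\ref{lem:peakonly} goes through verbatim, since it compares only two reports with the same peak.

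The second feature is that every coalition $C\subseteq N$ arises as $N_1(P)$. This holds because both peak types are present in $\mathcal D'$. It is the step needed to get $A(C)\vee B(C)=1$ for all $C$, and to define $\mathcal W_1$ and $\mathcal W_m$ on all of $2^N$.

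Lemma~\ref{lem:chain} is stated for abstract two-type profiles, so monotonicity, the committee property, the covering condition and the boundary conditions follow exactly as before. The converse direction also transfers: a voting-by-committees rule restricted to $\mathcal D'^n$ is still monotone on the chain, hence strategy-proof, and still top-selecting.

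\emph{Maximality.} Let $\mathcal D''\supsetneq\mathcal E$, or more generally any domain containing a preference $P^*$ with $\tau(P^*)=x_k$ for some $1<k<m$. Take the unanimous profile $(P^*,\ldots,P^*)$. Top-selection gives $G(P)\subseteq\{x_k\}$, and non-emptiness gives $G(P)=\{x_k\}$. This outcome is not one of $\{x_1\}$, $\{x_1,x_m\}$, $\{x_m\}$. So no rule of the form in Theorem~\ref{thm:committee}, whose outputs always lie in this three-element family, can coincide with a top-selecting $G$ on $\mathcal D''^n$. The peak-based committee representation therefore cannot be extended to any strictly larger domain, which is maximality under set inclusion.

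Finally, the example $x_1\,x_3\,x_2\,x_4$ already recorded in the paper shows $\mathcal D\subsetneq\mathcal E$. Combined with the sufficiency half, this justifies the claim that single-dippedness is sufficient but not necessary.

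The main obstacle is interpretive rather than technical: fixing what ``maximal for the characterization'' means so that the interior-peak argument is a genuine obstruction. I will state it as the nonexistence of any representation with outcomes in the three-element chain.
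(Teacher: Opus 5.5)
Your proposal is correct and is essentially the paper's proof: the subdomain half re-runs Lemma~\ref{lem:peakonly} and Theorem~\ref{thm:committee}, and the maximality half uses the same unanimous interior-peak profile at which top-selection forces $\{x_k\}$. You tighten the paper's argument by noting that realizing every $C\subseteq N$ as $N_1(P)$ requires both peak types in $\mathcal D'$ (in the one-type case you should exhibit committees, e.g.\ $\mathcal W_1=\mathcal W_m=\{C:C\neq\emptyset\}$, rather than only saying $G$ is pinned down); also note that the $x_1\,x_3\,x_2\,x_4$ example gives $\mathcal D\subsetneq\mathcal E$ only when $m\geq 4$, since for $m=3$ every endpoint-peaked order is single-dipped.
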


\begin{proof}
Let $\mathcal D'\subseteq\mathcal E$. Every preference in $\mathcal D'$ has its peak at one of the two endpoints. Consequently, top-selection restricts the range to
\[
\bigl\{\{x_1\},\{x_1,x_m\},\{x_m\}\bigr\},
\]
and responsive dominance over these three outcomes is determined solely by the location of the peak,  as above. Hence the proof of Lemma~\ref{lem:peakonly} and the subsequent peak-based characterization apply on $(\mathcal D')^n$.

Conversely, suppose a domain contains a preference $P^*$ with an interior peak
\[
\tau(P^*)=x_j,
\qquad
1<j<m.
\]
Consider the profile at which every agent reports $P^*$. Top-selection then requires
\[
G(P)\subseteq\{x_j\}.
\]
Since outcomes are nonempty,
\[
G(P)=\{x_j\}.
\]
This outcome is not among
\[
\bigl\{\{x_1\},\{x_1,x_m\},\{x_m\}\bigr\},
\]
so no representation of the form in Theorem~\ref{thm:committee} can describe a top-selecting rule on such a domain. Thus, once an interior-peaked preference is admitted, the three-outcome committee characterization no longer applies.
\end{proof}

\begin{proposition}[Group-strategy-proofness]\label{prop:gsp}
Every rule characterized in Theorem~\ref{thm:committee} is group-strategy-proof.
\end{proposition}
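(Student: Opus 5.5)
The plan is to reduce the problem to the chain setting of Lemma~\ref{lem:chain} and invoke its final clause, taking care that coalitional deviations in the original model correspond to deviations in the type model. Fix a voting by committees rule $G$ on $\mathcal E^n$. By its definition, $G(P)$ depends on $P$ only through $N_1(P)$, so $G(P)=g(t(P))$. Here $t_i(P)=\downarrow$ if $\tau(P_i)=x_1$ and $t_i(P)=\uparrow$ if $\tau(P_i)=x_m$, and $g$ takes values in the chain $\{x_1\}<_L\{x_1,x_m\}<_L\{x_m\}$. In the proof of the converse direction of Theorem~\ref{thm:committee} we already saw that monotonicity of the committees makes $A(C)=\mathbf 1[x_1\in G(C)]$ non-decreasing and $B(C)=\mathbf 1[x_m\in G(C)]$ non-increasing in $C=N_1(P)$. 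Hence switching an agent from $\downarrow$ to $\uparrow$ weakly raises the outcome in $<_L$, which means $g$ is monotone in the sense of Lemma~\ref{lem:chain}(ii).

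Next I will check that the operative preferences match the two types of the lemma. As shown before Lemma~\ref{lem:peakonly}, responsive dominance is complete on the three-element range for every endpoint-peaked preference. A peak-$x_1$ agent ranks $\{x_1\}\succ^R\{x_1,x_m\}\succ^R\{x_m\}$, which is type $\downarrow$, and a peak-$x_m$ agent ranks the reverse, which is type $\uparrow$. Her weak relation $\succeq_i^R$ on the range therefore coincides with the chain comparison of her type.

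Then I will transfer the deviation. Suppose coalition $S$ deviates from $P$ to $(P_S',P_{-S})$ with all members weakly better off and some member strictly better off under $\succeq^R_i$. Both outcomes lie in the range, so they equal $g(t(P))$ and $g(t')$ respectively, where $t'=t(P_S',P_{-S})$ differs from $t(P)$ only in coordinates in $S$. Each member's preference over these outcomes is determined by her true type $t_i(P)$. The deviation is therefore a profitable coalitional deviation for $g$ from the truthful type profile $t(P)$. This contradicts the group-strategy-proofness clause of Lemma~\ref{lem:chain}.

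The main point needing care is this last transfer. Several distinct misreports map to the same type report, and misreports in $\mathcal E$ can only change peaks, so the set of reachable type deviations is exactly all of $\{\downarrow,\uparrow\}^S$. Weak improvement under $\succeq_i^R$ also has to be read as equality or strict dominance; since dominance is complete on the range, this is exactly the chain weak order. Once these identifications are checked, the rest follows directly from the lemma. The argument also holds verbatim on any subdomain of $\mathcal E$, as in Proposition~\ref{prop:maximal}.
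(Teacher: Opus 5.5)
Your proposal is correct and follows essentially the same route as the paper: both treat the rule as a monotone map on the chain $\{x_1\}<_L\{x_1,x_m\}<_L\{x_m\}$, identify peak-$x_1$ and peak-$x_m$ agents with the types $\downarrow$ and $\uparrow$, and invoke the group-strategy-proofness clause of Lemma~\ref{lem:chain}. Your explicit check that a coalitional misreport in $\mathcal E$ induces a profitable deviation in the type model, with weak responsive dominance coinciding with the chain weak order on the three-element range, spells out a step the paper leaves implicit.
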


\begin{proof}
By Theorem~\ref{thm:committee}, every such rule is a monotone rule on the chain
\[
\{x_1\}<_L\{x_1,x_m\}<_L\{x_m\}.
\]
A peak-$x_1$ agent has type $\downarrow$, while a peak-$x_m$ agent has type $\uparrow$. Lemma~\ref{lem:chain}(ii) therefore implies group-strategy-proofness. Intuitively, a deviation that moves the outcome upward in $<_L$ can benefit only agents of type $\uparrow$, while any type-$\downarrow$ member of the deviating coalition is strictly worse off; monotonicity prevents a coalition consisting only of type-$\uparrow$ agents from generating such an upward movement by their reports. The downward case is symmetric.
\end{proof}

\begin{example}[Why a coalition cannot gain]\label{ex:gsp}
Consider the anonymous rule of Example~\ref{ex:committee}(a) with $n=3$ and thresholds
\[
q_1=q_2=2.
\]
Thus,
\[
G(P)=
\begin{cases}
\{x_1\}, & n_m(P)\leq1,\\
\{x_m\}, & n_m(P)\geq2.
\end{cases}
\]
Suppose agents $1$ and $2$ peak at $x_1$, while agent $3$ peaks at $x_m$. Then $n_m(P)=1$ and
\[
G(P)=\{x_1\}.
\]
Agents $1$ and $2$ already receive their most preferred outcome, so no coalition consisting only of them can benefit from a deviation. Agent~3 would prefer $\{x_m\}$, but reaching this outcome requires $n_m(P)\geq2$, so at least one of agents $1$ and $2$ would have to report a peak at $x_m$. Such an agent would then receive $\{x_m\}$ instead of $\{x_1\}$ and would be strictly worse off under her true preference. Hence, every deviation that moves the outcome upward in $<_L$ in order to benefit agent~3 makes a peak-$x_1$ member strictly worse off. No coalition can therefore make all its members weakly better off with at least one member strictly better off, illustrating Proposition~\ref{prop:gsp}.
\end{example}

\section{Concluding remarks}\label{sec:conclusion}

On single-dipped domains, strategy-proof set-valued choice is organized by the two endpoints of the line. For fixed-length intervals compared through a common responsive extension, the induced preference over the ordered intervals is again single-dipped, so the unanimous strategy-proof rules select an extreme interval, and the anonymous rules are parameterized by a single threshold (Theorem~\ref{thm:interval}). For arbitrary sets under top-selection, the strategy-proof rules are the voting-by-committees rules, with one committee governing each endpoint (Theorem~\ref{thm:committee}) and the anonymous rules taking a two-threshold quota form (Corollary~\ref{cor:quota}): the outcome is a singleton when agents agree on the preferred endpoint and the two-element set when they do not. Pareto efficiency does not give this, since interior alternatives can be Pareto efficient; it is top-selection, requiring each chosen alternative to be some agent's peak, that leaves only the endpoints.

The endpoint restriction itself is not assumed but derived: unanimity and strategy-proofness force the range into $\{x_1,x_m\}$ on the single-dipped domain, and into $\{I_1,I_M\}$ in the interval benchmark. Whether an analogue holds for set-valued rules without fixing the domain in advance is open and less immediate. The range is now a family of sets rather than alternatives, its boundedness in any natural order on sets is not automatic, and the completeness of responsive dominance over the relevant outcomes---which is what makes the committee structure tractable in Theorem~\ref{thm:committee}---is no longer guaranteed once the three-set range is not fixed in advance. The committee objects themselves would have to be determined rather than taken as given, and it is not clear that strategy-proofness alone provides enough structure to force that.
\bibliographystyle{plainnat}
\bibliography{order}

\end{document}